\documentclass[american,english,journal]{IEEEtran}
\usepackage[T1]{fontenc}
\usepackage[latin9]{inputenc}
\usepackage{amsmath}
\usepackage{amsthm}
\usepackage{amssymb}
\usepackage{graphicx}

\makeatletter
\usepackage{babel}
\usepackage{amsfonts}
\usepackage{cite}
\usepackage{array}
\usepackage{algorithm}
\usepackage{algorithmic}
\usepackage{subfigure}
\usepackage{stfloats}
\usepackage{graphicx}
\usepackage{float}
\usepackage[font=small,labelfont=normalfont]{caption}
\usepackage{url}
\usepackage{siunitx}

\makeatother

\makeatother

\theoremstyle{plain}
\newtheorem{lem}{\protect\lemmaname}
\newtheorem{thm}{\protect\theoremname}
\usepackage{babel}
\addto\captionsamerican{\renewcommand{\lemmaname}{Lemma}}
\addto\captionsamerican{\renewcommand{\theoremname}{Theorem}}
\addto\captionsenglish{\renewcommand{\lemmaname}{Lemma}}
\addto\captionsenglish{\renewcommand{\theoremname}{Theorem}}
\providecommand{\lemmaname}{Lemma}
\providecommand{\theoremname}{Theorem}

\begin{document}
\title{Blockchain-based Proportional Fair Scheduling for Multi-Operator O-RAN}
\author{Kun~Huang, Xintong~Ling,~\IEEEmembership{Member,~IEEE}, Meining~Wu,
Jiaheng~Wang,~\IEEEmembership{Senior Member,~IEEE},\\
Zhi~Ding,~\IEEEmembership{Fellow,~IEEE}, and Xiqi~Gao,~\IEEEmembership{Fellow,~IEEE}\vspace{-0.6cm}
\thanks{K. Huang, X. Ling, M. Wu, J. Wang, X. Gao are with the National Mobile
Communications Research Laboratory, Southeast University, Nanjing
210096, China (e-mail: kunhuang@seu.edu.cn, meining.wu@outlook.com,
xtling@seu.edu.cn, jhwang@seu.edu.cn, and xqgao@seu.edu.cn). K. Huang,
X. Ling, J. Wang, and X. Gao are also with the Purple Mountain Laboratories,
Nanjing 210023, China. J. Wang is also with the Cyber Science and
Engineering, Southeast University.}\thanks{Z. Ding is with the Department of Electrical and Computer Engineering,
University of California at Davis, Davis, CA 95616 USA (e-mail: zding@ucdavis.edu).}}
\maketitle
\begin{abstract}
The openness and disaggregation of Open radio access network (O-RAN)
facilitate resource sharing and coordination across networks, creating
new demands for efficient and trustworthy cross-operator scheduling.
However, such scheduling is beyond the scope and capability of conventional
proportional fair scheduling (PFS), which lacks mechanisms for establishing
trust among independent operators. To fulfill this gap, we propose
the blockchain-based proportional fair scheduling (BC-PFS) that enables
trustworthy inter-network coordination and resource pooling across
operators in O-RAN. Specifically, we design four core smart contracts
including registration, status reporting, scheduling, and settlement
contracts with corresponding Solidity implementations to ensure trustworthy
on-chain execution. Theoretically, to evaluate the BC-PFS performance,
we develop an analytical framework to derive the user average throughput
via both probabilistic and ordinary differential equation (ODE) approaches,
and provide a simplified closed-form solution. Based on the above
performance assessment, we quantify the pooling effect in O-RAN achieved
through trustworthy cross-operator collaboration via BC-PFS, and point
out that this effect grows monotonically in both the numbers of operator
networks and users. Simulations validate the theoretical analysis
and show the performance of the BC-PFS in O-RAN. 
\end{abstract}

\begin{IEEEkeywords}
Blockchain, O-RAN, proportional fairness, trustworthiness, user scheduling,
wireless communications.
\end{IEEEkeywords}

\section{Introduction}

As the telecommunication industry transitions from 5G to the upcoming
6G, the growing diversity of applications demands seamless communication
with guaranteed quality of service (QoS) across heterogeneous networks
\cite{Fu2024}. Open radio access network (O-RAN) has emerged as a
next-generation network architecture to meet these requirements through
openness, disaggregation, intelligence, and programmability \cite{whitePaper2018,Polese2023}.
By leveraging network function virtualization and open interfaces,
O-RAN breaks the closed nature of traditional architectures and enables
broader participation from multiple operators, vendors, and third-party
applications. The growing demand for spectrum and RAN infrastructure
sharing in O-RAN necessitates efficient and trustworthy coordination
across different networks and operators \cite{Oransharing2023,Damnjanovic2024,Javed2025}. 

However, the trust foundation for this collaboration is often limited
in O-RAN due to the disaggregated network components, multi-vendor
implementations, and separate network management \cite{Giupponi2022}.
Specifically, cross-operator collaborative scheduling requires multiple
stakeholders to trustfully exchange user and resource information
and jointly execute their scheduling decisions. The reliability of
coordinated decisions and their execution is also challenging for
different parties to verify. Moreover, geographically fragmented and
heterogeneous networks further increase the difficulty of tracing
and auditing the scheduling process. Therefore, the lack of trust
mechanisms constrains reliable coordination across networks, ultimately
degrading overall network performance. These challenges underscore
the necessity for a scheduling paradigm that ensures both distributed
resource orchestration and trustworthy inter-network cooperation within
O-RAN. 

As the default downlink scheduling for 4G, proportional fair scheduling
(PFS), can balance system throughput and user fairness \cite{Astely2009}
and has demonstrated robust performance under diverse network conditions
\cite{Astely2009,Haque2023}. To extend PFS beyond isolated networks,
early multi-cell PFS approaches proposed in \cite{Zhou2011} coordinated
collaborative decisions across base stations with shared user information.
Subsequent studies further developed joint PFS strategies that maximize
the product of user priorities and extended them to multi-user selection
scenarios \cite{Gu2016,Li2018,Zhang2022a}. However, these schemes
inherently rely on information sharing among base stations or networks,
and generally assume trustworthy execution among participating entities
and truthful information exchange. Yet this assumption is hard to
sustain in multi-operator O-RAN, where participants lack an inherent
trust foundation for collaborative scheduling. As a result, the cross-network
coordination process is vulnerable to potential manipulation risks
and unreliable execution.

Remark that blockchain provides a promising trust infrastructure for
trustworthy inter-network collaboration through consensus-based validation
and smart contract execution \cite{Faisal2022}. By maintaining a
shared, tamper-resistant, and traceable ledger, blockchain offers
multiple operators a consistent and verifiable view of coordination
without relying on intermediaries \cite{Wu2026}. Consensus mechanisms
establish agreement on scheduling rules and states, while smart contracts
enable automatic and transparent execution of coordination strategies
\cite{Xu2024}. These capabilities have enabled blockchain to support
dynamic user scheduling and resource sharing across operators for
O-RAN \cite{Ling2019,Ling2025,Chen2025,Wang2022d}. 

Recent studies have explored blockchain to facilitate cross-operator
resource coordination and user association. Research in \cite{Ling2019,Ling2020a,Cao2023,Wang2022d,Ling2025a}
enables flexible cross-operator spectrum access through blockchain-based
decentralized coordination, while works like \cite{Femenias2024,Al2024}
utilized smart contracts for dynamic spectrum pricing and demand-driven
resource allocation. Beyond architectural designs, several studies
have sought to improve network performance through mathematical modeling
of user association \cite{Qian2025,Zhang2020d,Yu2026}. The study
in \cite{Zhang2022} introduced a proof-of-strategy consensus mechanism
that maximizes the global efficiency with optimal user association.
To ensure QoS during spectrum access, \cite{Cao2023} employed a repeated
game-theoretic framework to establish the trust relationship between
users and service providers. Similar optimization perspectives can
be found in \cite{Zhu2022,Ayepah2023}, where auctions and game theory
are employed to refine association strategies. Although these blockchain-based
schemes improve coordination transparency and auditability, they often
rely on complex iterative algorithms, which pose challenges for deployment
in large-scale and geographically distributed O-RAN. Therefore, it
calls for a low-complexity on-chain scheduling scheme. Meanwhile,
most modeling studies lack analytical frameworks to quantitatively
characterize the performance gains of cross-operator scheduling through
blockchain. The performance improvement under blockchain-enhanced
cooperation remains unclear.

In summary, despite these advances, several critical limitations remain.
On the one hand, existing cross-operator user scheduling algorithms
often lack explicit trust mechanisms for inter-network cooperation,
making reliable coordination difficult to sustain in multi-operator
O-RAN. On the other hand, most blockchain-based scheduling approaches
lack tight integration with the physical layer and often involve high
computational complexity. Furthermore, existing studies provide limited
analysis of cross-operator scheduling enabled by blockchain. The fundamental
questions remain unanswered, particularly regarding how to facilitate
efficient and trustworthy cross-operator scheduling and how many improvements
can be attained through blockchain-enhanced O-RAN.

To fulfill the above gaps, we propose the blockchain-based proportional
fair scheduling (BC-PFS) which integrates PFS to enable trustworthy
cross-operator user scheduling for O-RAN. BC-PFS preserves the low-complexity
and responsive adaptation of conventional PFS, and meanwhile it establishes
trusted coordination and verifiable execution among operators. Moreover,
we develop systematic analytical frameworks to assess the system performance
and quantify the pooling effect of O-RAN via BC-PFS. Ultimately, we
aim to address the fundamental questions about efficiency, fairness,
and measurable utility improvements in O-RAN via blockchain-enabled
scheduling across operators. The main contributions of this paper
are summarized as follows:
\begin{itemize}
\item We propose the BC-PFS that enables trustworthy user scheduling for
multi-operator O-RAN. Specifically, we design the BC-PFS with four
core smart contracts, namely registration, status reporting, scheduling,
and settlement contracts.
\item We establish a comprehensive analytical framework for the BC-PFS performance
evaluation and derive the user throughput through both probabilistic
modeling and ordinary differential equation (ODE) approaches. Furthermore,
we provide a simplified solution in a closed form by introducing a
relatively weak assumption. 
\item We quantitatively characterize the pooling effect introduced by BC-PFS
in O-RAN. We prove that the pooling effect is monotonically increasing
with respect to the numbers of operator networks and users per network,
demonstrating consistent performance improvement from the network
openness.
\item Through a series of simulations, we validate our theoretical derivations
and demonstrate the superiority of the BC-PFS compared to conventional
non-cooperative approaches. We provide the BC-PFS implementation based
on Solidity and JavaScript at \url{https://github.com/kunhuangseu/BC-PFS}.
\end{itemize}
The remainder of this paper is organized as follows. Section \ref{sec:System-model}
outlines the system model. Section \ref{sec:Revisit-PF-Scheduling}
revisits the PFS algorithm. Section \ref{sec:Blockchain-based-PF-Scheduling}
presents the proposed BC-PFS. Section \ref{sec:Performance-Analysis}
provides the mathematical performance analysis. Section \ref{sec:Blockchain-Pooling-Effect}
quantitatively evaluates the pooling effect. Section \ref{sec:Simulation-and-analysis}
presents the simulations. Finally, Section \ref{sec:Conclusion} concludes
the paper.

\section{System model\label{sec:System-model}}

\subsection{Multi-operator O-RAN\label{subsec:Inter-network-Scheduling-Model}}

\begin{figure}
\centering\includegraphics[width=0.33\paperwidth]{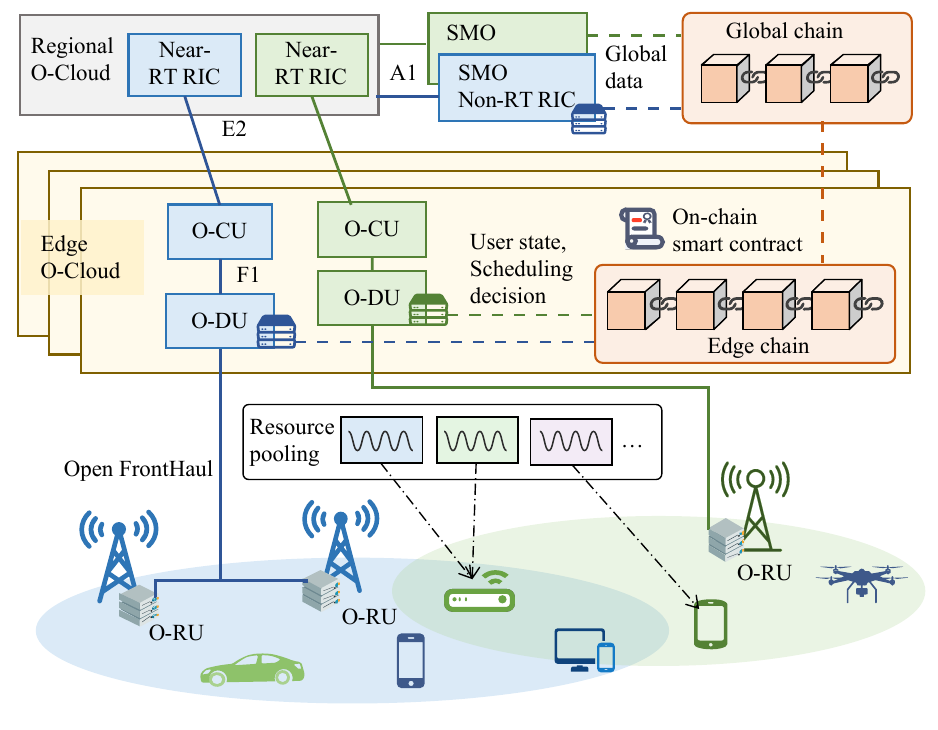}\caption{\foreignlanguage{american}{Illustration of the BC-PFS for multi-operator O-RAN.\label{fig:The-system-architecture}}}
\vspace{-2mm}
\end{figure}
We consider a $K$-operator O-RAN, each serving $N$ users. As illustrated
in Fig. \ref{fig:The-system-architecture}, O-RAN consists of functional
entities deployed across edge and regional O-RAN cloud (O-Cloud).
At edge O-Cloud, each operator maintains logically isolated O-RAN
distributed unit (O-DU) and O-RAN centralized unit (O-CU) instances.
The O-DU performs radio resource scheduling and connects to O-RAN
radio units (O-RUs), which provide radio access to user equipment
(UEs). At the regional O-Cloud, the near-real-time RAN intelligent
controller (Near-RT RIC) provides programmable control and coordination,
while the non-real-time RIC (Non-RT RIC) and service management and
orchestration (SMO) support long-term orchestration, policy management,
and network optimization.

The scheduling is based on a time-division allocation framework, where
time is partitioned into discrete, uniformly spaced slots of duration.
We assume that the channel coherence time is larger than a slot time.
Let $r_{nk}\left(t\right)$ denote the instantaneous rate provided
by operator $k$ to user $n$ at time slot $t$. Every operator $k$
has the corresponding licensed spectrum to provide services, and the
instantaneous rates $r_{nk_{1}}\left(t\right)$ and $r_{nk_{2}}\left(t\right)$
for any two distinct operators $k_{1}\neq k_{2}$ are mutually independent.
Furthermore, $\mu_{nk}\left(t\right)$ represents the throughput achieved
by user $n$ from operator $k$ up to time slot $t$. 

In the considered multi-operator O-RAN, a user can dynamically access
multiple operators for service, while each operator\textquoteright s
channel can serve only one user in each time slot. This association
is captured by a binary user-network association indicator $I_{nk}\left(t\right)\in\left\{ 0,1\right\} $,
where $I_{nk}\left(t\right)=1$ indicates user $n$ accesses the channel
of operator $k$ during slot $t$, while $I_{nk}\left(t\right)=0$
indicates the absence of scheduling for user $n$ by operator $k$
in that slot. Consequently, the achievable rate for user $n$ at slot
$t$ is given by $r_{n}\left(t\right)=\sum^{K}_{k=1}r_{nk}\left(t\right)I_{nk}\left(t\right)$.
The cumulative throughput for user $n$ aggregates contributions from
all operators and is given by $\mu_{n}\left(t\right)=\sum^{K}_{k=1}\mu_{nk}\left(t\right)$. 

\subsection{Wireless Channel Model\label{subsec:Wireless-Channel-Model}}

In this work, the wireless channel is modeled as the well-known Rayleigh
fading by assuming that the signal envelope follows a Rayleigh distribution
\cite{Rappaport2001}. It can effectively characterize rich scattering
environments lacking line-of-sight propagation paths. According to
the Rayleigh fading model, for user $n$, its instantaneous signal-to-noise
ratio (SNR) can be modeled as an exponentially distributed random
variable with the probability density function, $f_{\textrm{SNR}_{n}}\left(x\right)=\frac{1}{\gamma_{n}}\textrm{exp}\left(-\frac{x}{\gamma_{n}}\right)$,
where $\gamma_{n}$ represents the average SNR for user $n$. Based
on Shannon's formula, the instantaneous transmission rate $r_{n}$
can be expressed as:
\begin{equation}
r_{n}=B\log_{2}\left(1+\textrm{SNR}_{n}\right),\label{eq:SNR}
\end{equation}
with $B$ denoting the channel bandwidth. The statistical characteristics
of the rate are given by:
\begin{align}
\bar{r}_{n}= & \int^{\infty}_{0}B\log_{2}\left(1+\gamma_{n}x\right)\cdot\textrm{exp}\left(-x\right)dx,\label{eq:mean}\\
\sigma^{2}_{n}= & \int^{\infty}_{0}\left(B\log_{2}\left(1+\gamma_{n}x\right)\right)^{2}\cdot\textrm{exp}\left(-x\right)dx-\bar{r}^{2}_{n}.\label{eq:variance}
\end{align}
The average $\bar{r}_{n}$ and variance $\sigma^{2}_{n}$ characterize
the long-term average performance and random fluctuations of the user's
transmission rate, respectively. 

The logarithmic rate model in \eqref{eq:SNR} can be simplified to
a linear rate approximation through a first-order Taylor expansion
of Shannon\textquoteright s formula under low-SNR conditions. This
linear model has been widely used in the analysis of user scheduling
\cite{Kushner2004,Liu2011} and offers a practical trade-off between
accuracy and computational simplicity for low-SNR regimes. 

\section{Revisit Proportional Fair Scheduling\label{sec:Revisit-PF-Scheduling}}

In wireless networks, user scheduling fundamentally addresses a constrained
optimization problem aimed at selecting proper users to serve in limited
wireless resources to maximize network performance while adhering
to fairness and feasibility requirements \cite{Liew2008,Ma2019}.
This optimization framework typically involves defining criteria that
quantitatively characterize key system performance metrics, including
throughput and fairness. Among various fairness criteria, proportional
fairness has emerged as a pivotal concept due to its ability to achieve
a balance between network throughput and fairness.

We define the long-term average throughput of user $n$, denoted by
$\bar{\mu}_{n}=\underset{t\rightarrow\infty}{\text{lim}}E\left[\mu_{n}\left(t\right)\right]$,
as the limit of the expected instantaneous throughput over time. In
a single operator network with $N$ users, as formally proposed by
\cite{kelly1997}, proportional fairness is achieved by a feasible
throughput allocation $\left\{ \bar{\mu}^{*}_{n},n=1,2,\ldots,N\right\} $
such that for any other feasible allocation $\left\{ \bar{\mu}_{n},n=1,2,\ldots,N\right\} $,
the sum of proportional changes satisfies:
\begin{equation}
\sum^{N}_{n=1}\left(\bar{\mu}_{n}-\bar{\mu}^{*}_{n}\right)/\bar{\mu}^{*}_{n}\leq0.
\end{equation}
This definition implies that any unilateral improvement in a user\textquoteright s
throughput must be offset by a disproportionate reduction in others\textquoteright{}
allocations, thereby discouraging spectral monopolization. The intrinsic
value of proportional fairness lies in its dual emphasis on efficiency
and equity, ensuring that users with weaker channel conditions are
not perpetually marginalized while maintaining high aggregate throughput. 

The relationship between proportional fairness and optimization theory
is established through resource allocation problems. As proven in
\cite{kelly1997}, the solution inherently satisfies the proportional
fairness criterion with the logarithmic utility function. In this
case, the PFS corresponds to solving the following optimization problem:
\begin{equation}
\textrm{maximize}\sum^{N}_{n=1}\ln\left(\bar{\mu}_{n}\right),
\end{equation}
which is subject to system feasibility constraints. The logarithmic
utility function inherently prioritizes fairness by assigning diminishing
marginal returns to throughput increases. This property ensures that
modest improvements for users with historically low throughput yield
higher utility than marginal gains for users already operating at
high rates. 

In a single-operator network, the PFS operates as follows. The operator
acquires users' instantaneous rates $\ensuremath{\left\{ r_{n},n=1,2,\ldots,N\right\} }$
through feedback mechanisms, where users periodically report their
channel state information (CSI). The operator then selects the user
maximizing the ratio of the transmission rates to throughput $\ensuremath{\left\{ \mu_{n},n=1,2,\ldots,N\right\} }$.
Mathematically, user selection follows proportional fair rule:
\begin{equation}
n^{*}\left(t+1\right)=\underset{1\leq n\leq N}{\text{argmax}}\frac{r_{n}\left(t+1\right)}{\mu_{n}\left(t\right)}.\label{eq:=00591A=009891=006BB5=008C03=005EA6=0089C4=005219}
\end{equation}
After user selection, the operator allocates spectrum resources and
configures modulation schemes to maximize the selected user's data
rate. Data transmission starts with dynamic parameter adjustments
based on real-time CSI updates. Simultaneously, the selected user's
average throughput updates to incorporate the current rate via:
\begin{equation}
\mu_{n}\left(t+1\right)=\left(1-\alpha\right)\mu_{n}\left(t\right)+\alpha r_{n}\left(t+1\right)I_{n}\left(t+1\right).\label{eq:=00591A=009891=006BB5=00541E=005410=0091CF=0066F4=0065B0}
\end{equation}
The smoothing factor $\alpha\in\left(0,1\right)$ controls the throughput
adaptation rate. Smaller $\alpha$ yields gradual throughput adjustments,
while larger $\alpha$ enables faster responses to channel fluctuations.

The basic goal of PFS is to dynamically adapt resource allocation
over time, ensuring that long-term average throughput converges to
the optimal solution. By integrating real-time channel state information
with historical throughput data, these algorithms strike a balance
between exploiting favorable channel conditions and achieving fairness.

However, while proved effective in single-operator environments, the
PFS faces inevitable challenges in multi-operator O-RAN due to the
absence of trust. In O-RAN, coordinated user scheduling across operators
is significant to enable efficient resource utilization. Serving as
the foundational prerequisite for such cooperation, trust among operators
ensures reliable commitment to scheduling patterns and prevents deviations
from agreed policies. Without robust trusted multi-operator collaboration
mechanisms, scheduling policies in existing approaches may fail in
practice due to opportunistic or malicious behaviors like free-riding
or strategic non-compliance, which result in globally suboptimal performance.
Therefore, the gap in current research highlights the critical need
for novel scheduling paradigms that enable trustworthy multi-operator
collaboration within the open and disaggregated architecture of O-RAN. 

\section{Blockchain-based Proportional Fair Scheduling\label{sec:Blockchain-based-PF-Scheduling}}

\subsection{Overview}

To overcome these limitations of traditional user scheduling in Section
\ref{sec:Revisit-PF-Scheduling}, we integrate blockchain into the
PFS framework and design the BC-PFS to bridge the trust gap in multi-operator
collaboration and preserve the fairness-efficiency balance. This section
presents the BC-PFS scheme for trustworthy and efficient user association
in multi-operator O-RAN.

To support the proposed BC-PFS, we introduce a two-layer blockchain
architecture into O-RAN, consisting of multiple edge chains and a
global chain (See Fig. \ref{fig:The-system-architecture}). The edge
chain is deployed at the edge O-Cloud and corresponds to a sub-region,
defined as a geographical service area covered by a group of O-DUs
from multiple operators. It coordinates cross-operator user scheduling
within a sub-region. The global chain is deployed in proximity to
the SMO and aggregates information from edge chains to support overall
network management across sub-regions. 

For user scheduling, each O-DU collects user states and submits the
required scheduling information to the edge chain. Based on the reported
user states and historical throughput, the edge chain executes PFS
algorithm via smart contracts and delivers the scheduling decisions
to the corresponding O-DUs, which then translate these decisions into
radio resource allocation for transmission. 

Meanwhile, the edge chain periodically uploads aggregated state information
and historical transaction records to the global chain, providing
consistent global state view and auditability. The global chain is
responsible for transaction settlement, spectrum management, and network
coordination. Moreover, based on the scheduling information maintained
by global chain, the SMO and Non-RT RIC can further perform long-term
policy optimization.

In principle, BC-PFS is the collaborative scheduling of multiple operators.
Specifically, each operator independently schedules one user per time
slot. Operator $k$ schedules user $n^{*}_{k}$ at slot $t+1$ according
to the proportional fair rule:
\begin{equation}
n^{*}_{k}\left(t+1\right)=\underset{1\leq n\leq KN}{\text{argmax}}\frac{r_{nk}\left(t+1\right)}{\mu_{n}\left(t\right)},\:\textrm{for}\:k=1,\ldots,K.\label{eq:scheduling criteria}
\end{equation}
After user selection, the throughput is updated iteratively to reflect
the latest resource allocations:
\begin{equation}
\mu_{n}\left(t+1\right)=\left(1-\alpha\right)\mu_{n}\left(t\right)+\alpha\sum^{K}_{k=1}r_{nk}\left(t+1\right)I_{nk}\left(t+1\right).\label{eq:eq11}
\end{equation}
Remark that this approach can achieve stationary average throughput
by dynamically adapting to channel variations and adjusting scheduling
decisions over time. By integrating instantaneous channel conditions
with long-term performance, it ensures that the average throughput
converges to the globally optimal solution of the maximization problem
\cite{Kushner2004}:
\begin{equation}
\textrm{maximize}\sum^{KN}_{n=1}\ln\left(\bar{\mu}_{n}\right).\label{eq:9}
\end{equation}
In the proposed BC-PFS, the blockchain serves as a core to coordinate
cross-operator scheduling for O-RAN. First, it provides a transparent
platform for inter-operator spectrum pooling in O-RAN, enhancing the
capability of multiple operators to trustfully share and aggregate
licensed frequency bands across networks. Second, the edge chain collects
authenticated user states, executes the PFS rule through smart contracts,
and records scheduling decisions immutably, ensuring verifiable cross-operator
user scheduling in a sub-region. Third, the global chain aggregates
service and transaction records from edge chains to support overall
coordination, service settlement, and network management. Therefore,
blockchain enables cross-network resource pooling and trustworthy
collaborative scheduling. By ensuring both transparent execution and
traceable records of scheduling, it establishes a reliable trust foundation
for inter-operator collaboration. Enhanced by blockchain, BC-PFS ensures
that all participants operate under verifiable protocols through automated
execution of smart contracts, while enforcing agreement on real-time
scheduling decisions via consensus mechanisms. 
\begin{figure}
\centering\includegraphics[width=0.4\paperwidth]{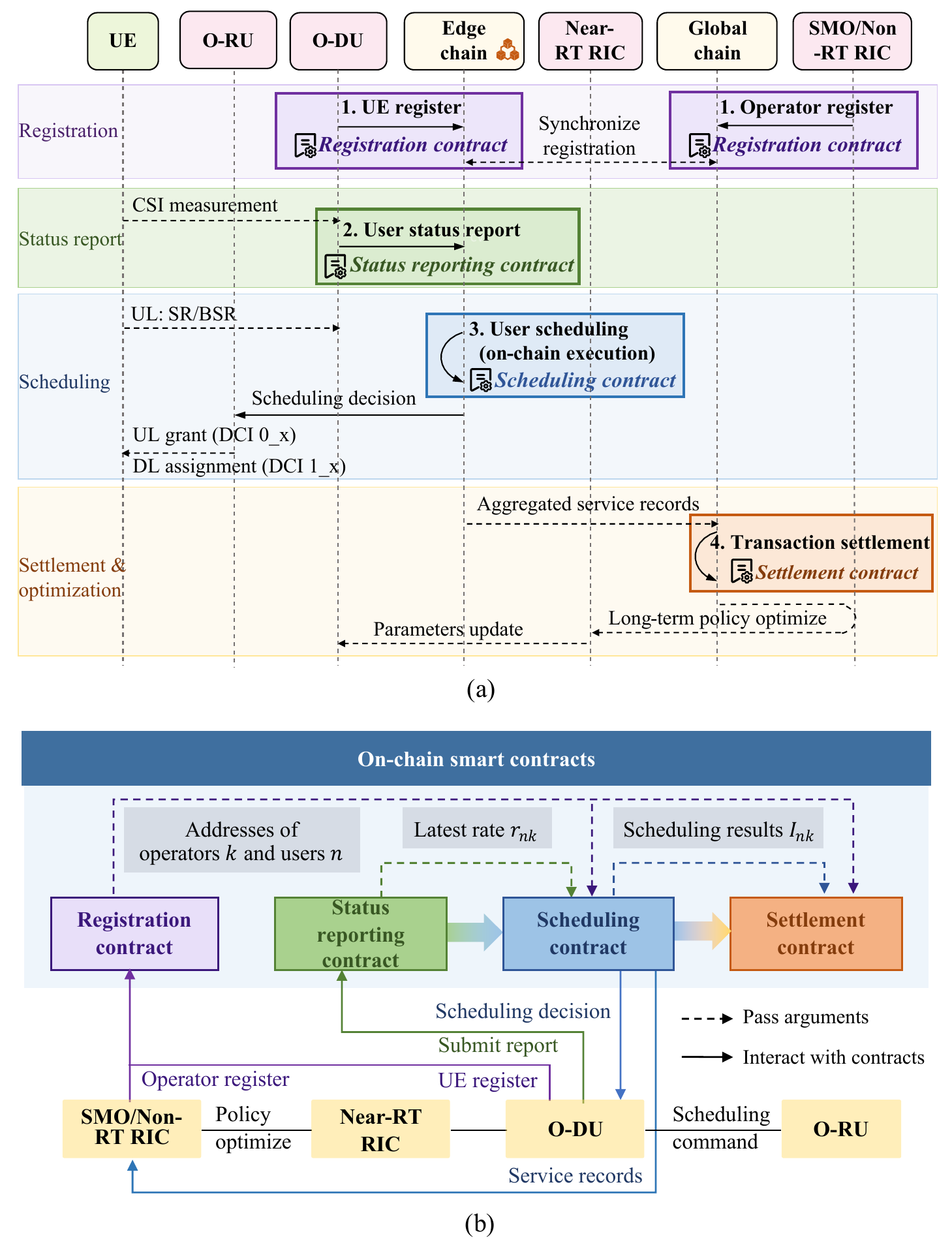}\caption{\foreignlanguage{american}{Implementation of the BC-PFS in O-RAN. (a) Brief workflow of the BC-PFS.
(b) Interactions among smart contracts and network entities. \label{fig:System-workflow-of}}}

\vspace{-2mm}
\end{figure}

\subsection{Workflow\label{subsec:Workflow}}

In this subsection, we present the complete cycle in Fig. \ref{fig:System-workflow-of}(a)
to illustrate the proposed BC-PFS in O-RAN. 

1) User and operator registration. The workflow initiates with user
and operator registration. After user accesses the network, the O-DU
collects user's registration information and submits it to the edge
chain. The registration contract verifies the user identity and records
the authenticated identities on-chain. Meanwhile, each operator registers
with the global chain through the SMO. The global chain and edge chains
then synchronize necessary registration information.

2) User status report. Following registration, users continuously
measure their channel conditions and periodically upload status reports
including critical metrics such as CSI. The O-DU preprocesses these
reports and submits required scheduling information to the edge chain.
The status reporting contract aggregates and validates reports and
updates the latest user states stored on-chain, ensuring scheduling
decisions are based on up-to-date channel conditions. 

3) User scheduling. UE with uplink (UL) transmission demand sends
scheduling request (SR) and buffer status report (BSR) to the O-DU
for indication, while downlink (DL) transmission demands are available
at the network side. During each scheduling interval, the scheduling
contract retrieves the latest user states, and queries the historical
throughput records from the previous validated block. It then executes
the PFS algorithm using the instantaneous rates and historical throughputs
to determine the scheduled users and update throughput records. The
scheduling decision is delivered to the corresponding O-DU, which
performs the resource allocation for the selected users and conveys
the resulting UL grant and DL assignment to the UEs through downlink
control information (DCI) formats 0\_x and 1\_x, respectively.

4) Settlement and optimization. Upon scheduling, the edge chain aggregates
service records and periodically synchronizes them with the global
chain, where the settlement contract calculates cross-operator service
charges according to predefined pricing and settlement rules. All
settlement records, including billing details and service quality
metrics, are permanently recorded on-chain, forming verifiable network
service credentials. Meanwhile, the Non-RT RIC analyzes long-term
service data from the global chain to optimize network policies and
deliver to the Near-RT RIC, which further optimizes scheduling parameters
for execution by O-DU. 

The above process enables trustworthy user scheduling across network
operators through secure verification, immutable records, and traceable
auditing. In particular, BC-PFS is built on the permissioned blockchain
jointly maintained by the participating operators, and can use the
consensus protocols such as Raft \cite{Luo2024}. This design is suitable
for the multi-operator scenario, since cross-operator coordination
requires controlled membership. Particularly, the consensus process
based on Raft incurs relatively low energy consumption and is favorable
for resource-constrained wireless deployments. Cryptographic techniques,
particularly digital signatures, can ensure the integrity and authenticity
of all status reports and transactions, preventing unauthorized modifications,
and enabling secure verification of participants\textquoteright{}
identities throughout the scheduling process.

Smart contract can automatically enforce scheduling decisions and
payments. Thanks to the low complexity of the PFS, the corresponding
smart contracts can be easily implemented based on users\textquoteright{}
instantaneous rates and historical throughputs in the current and
previous blocks. The online incremental update style of the PFS also
fits well with on-chain operations via smart contracts since the blockchain
information is also updated incrementally. Moreover, the scheduling
operations and transactions are immutably recorded on the blockchain,
which ensures transparency and establishes trust between different
networks.

Note that the BC-PFS requires the consensus among different operators,
which inevitably introduces extra delay. More specifically, the scheduling
interval is determined by the block time, i.e., the average time to
create a new block which contains the most recent scheduling result.
If the channel coherence time is much shorter than the block time,
or equivalently the scheduling interval, the BC-PFS will rely on outdated
channel states, which could largely degrade the performance. Therefore,
permissioned ledgers with short block time are more preferred over
public permissionless blockchains like Bitcoin whose block time is
around 10 minutes. For instance, Kaspa \cite{kaspa2025} achieves
the block time on the order of 100 ms, and Quorum \cite{Quorum2018}
can be configured to generate blocks every 50 ms. Furthermore, the
project of MegaEth \cite{megaeth2025} can achieve the block time
of 10 ms and further claims it can be reduced to around 1 ms, approaching
real-time operations. These low-latency blockchains can achieve the
block time close to or even shorter than the channel coherence time
in scenarios with slow or moderate mobility (10-50 ms) \cite{Jung2011}.
(As a reference, a pedestrian walking at 3 km/h has the coherence
time around 76 ms when the carrier frequency 2 GHz is used.) Admittedly,
in high-mobility cases, the consensus process may have a non-negligible
impact on the performance of BC-PFS. 

\subsection{Implementation via Smart Contracts\label{subsec:implementation}}

In this subsection, we present the realization of BC-PFS via smart
contracts. Fig. \ref{fig:System-workflow-of}(b) depicts the interactions
among smart contracts and O-RAN entities, where smart contracts execute
the core business logic of BC-PFS by handling registration, user status
reporting, scheduling, and settlement. The design and functionality
of four core smart contracts are as follows. Furthermore, we provide
Solidity codes in terms of the four core smart contracts in Algorithms
1-4, respectively. \footnote{The repository address of the source code is given in the Introduction.}
\begin{itemize}
\item The registration contract (see Algorithm 1) processes both user and
operator registration requests. During registration, the contract
first verifies two critical conditions: the entity ID must be unregistered
and the provided credentials must be validated according to the validProof
function. For successful registrations, the contract updates the registry
mapping with a tuple containing the generated ID and role flag, then
emits a RegistrationSuccess event containing the address, assigned
ID, and role type.\begin{algorithm}[t] \caption{Registration Contract} \begin{algorithmic}[1] \setlength{\algorithmicindent}{0.5em} \raggedright
\STATE struct Info \{ 
\STATE \hspace*{0.5em}uint id; \STATE \hspace*{0.5em}bool isOperator; \} 
\STATE uint private nextOpId = 1; \STATE uint private nextUserId = 1; 
\STATE mapping(address => Info) public registry; 
\STATE event RegistrationSuccess(address user, uint id, bool isOperator); 
\STATE \textbf{function register}(address user, bytes memory proof, bool isOperator) public \{ \setlength{\algorithmicindent}{0.5em}
\STATE \hspace*{0.5em}\textit{//  Identity validation} 
\STATE \hspace*{0.5em}require(validProof(proof, isOperator), "Invalid 
\hspace*{0.5em}credentials"); 
\STATE \hspace*{0.5em}\textit{// Registration} 
\STATE \hspace*{0.5em}if (isOperator) \{ \STATE \hspace*{1em}registry[user] = Info(nextOpId++, true); 
\STATE \hspace*{0.5em}\} else \{ 
\STATE \hspace*{1em}registry[user] = Info(nextUserId++, false); \} 
\STATE \hspace*{0.5em}emit RegistrationSuccess(user, registry[user].id, 
\hspace*{0.5em}isOperator); \} \end{algorithmic} \end{algorithm}
\end{itemize}
\begin{algorithm}[t] 
\caption{Status Reporting Contract} 
\begin{algorithmic}[1] 
\setlength{\algorithmicindent}{0.5em}  
\raggedright
\STATE mapping(address => mapping(address => uint[])) public operatorUserRates; 
\STATE event ReportSubmitted(address user, address operator, uint timestamp, uint rate); 
\STATE \textbf{function submitReport}(address user, address operator, bytes memory csi) public \{  
\STATE \hspace*{0.5em}\textit{// CSI validation}  \STATE \hspace*{0.5em}require(validCSI(csi), "Invalid CSI data");  
\STATE \hspace*{0.5em}\textit{// Rate estimation}  \STATE \hspace*{0.5em}uint rate = rateEstimation(csi);  
\STATE \hspace*{0.5em}\textit{// Rate update}  \STATE \hspace*{0.5em}operatorUserRates[user][operator].push(rate);  
\STATE \hspace*{0.5em}emit ReportSubmitted(user, operator, block.timestamp, \hspace*{0.5em}rate);   \} \end{algorithmic} \end{algorithm}

\begin{algorithm}[t] \caption{Scheduling Contract} \begin{algorithmic}[1] \setlength{\algorithmicindent}{0.5em}  \raggedright
\STATE mapping(address => address) public selectedUser; 
\STATE mapping(address => uint) public throughput; 
\STATE uint public constant alphaInv = 10000; 
\STATE mapping(address => mapping(address => bool)) public schedulingMatrix; 
\STATE mapping(address => mapping(address => uint)) public allocatedRate;
\STATE event Scheduled(address[] operators, address[] selectedUser);
\STATE \textbf{function updateScheduling}() public \{ 
\STATE \hspace*{0.5em} \textit{// User selection via (8)}    
\STATE \hspace*{0.5em}address[] memory users = getAllUsers(); 
\STATE \hspace*{0.5em}address[] memory operators = getAllOperators(); 
\STATE \hspace*{0.5em}for (uint k = 0; k < operators.length; k++) \{ 
\STATE \hspace*{1em}address op = operators[k];
\STATE \hspace*{1em}address prev = selectedUser[op];
\STATE \hspace*{1em}schedulingMatrix[prev][op] = false;
\STATE \hspace*{1em}allocatedRate[prev][op] = 0;
\STATE \hspace*{1em}uint maxPriority = 0; 
\STATE \hspace*{1em}uint bestLatestRate = 0;
\STATE \hspace*{1em}selectedUser[op] = address(0); 
\STATE \hspace*{1em}for (uint n = 0; n < users.length; n++) \{ 
\STATE \hspace*{1.5em}uint latestRate =\\
\hspace*{1.5em}StatusReporting.getLatestRate(users[n], op);
\STATE \hspace*{1.5em}uint priority = latestRate / throughput[users[n]]; 
\STATE \hspace*{1.5em}if (priority > maxPriority) \{ 
\STATE \hspace*{2em}maxPriority = priority; 
\STATE \hspace*{2em}selectedUser[op] = users[n];
\STATE \hspace*{2em}bestLatestRate = latestRate;\}\} 
\STATE \hspace*{1em}schedulingMatrix[selectedUser[op]][op] = true;
\STATE \hspace*{1em}allocatedRate[selectedUser[op]][op] = bestLatestRate;\}
\STATE \hspace*{0.5em} \textit{// Throughput update via (9)}     
\STATE \hspace*{0.5em}for (uint n = 0; n < users.length; n++) \{ 
\STATE \hspace*{1em}uint totalAllocated = 0; 
\STATE \hspace*{1em}for (uint k = 0; k < operators.length; k++) \{ 
\STATE \hspace*{1.5em}address op = operators[k]; 
\STATE \hspace*{1.5em}if (schedulingMatrix[users[n]][op]) \{ 
\STATE \hspace*{2em}notify(users[n], operators[k]); 
\STATE \hspace*{2em}totalAllocated += allocatedRate[users[n]][op];\}\}
\STATE \hspace*{1em}throughput[users[n]] = ((alphaInv - 1) * \hspace*{1em}throughput[users[n]] + totalAllocated) / alphaInv; \}
\STATE \hspace*{0.5em}emit Scheduled(operators, selectedUser); \} \end{algorithmic} \end{algorithm}

\begin{algorithm}[t] \caption{Settlement Contract} \begin{algorithmic}[1] \setlength{\algorithmicindent}{0.5em}  \raggedright
\STATE event PaymentProcessed(address user, address operator, uint cost);
\STATE \textit{// Notification of scheduling details} 
\STATE \textbf{function processSettlement}() public \{ 
\STATE \hspace*{0.5em}address[] memory users = getAllUsers(); 
\STATE \hspace*{0.5em}address[] memory operators = getAllOperators(); 
\STATE \hspace*{0.5em}for (uint n = 0; n < users.length; n++) \{ 
\STATE \hspace*{1em}for (uint k = 0; k < operators.length; k++) \{ 
\STATE \hspace*{1.5em}bool scheduled = Scheduling.getSchedulingMatrix\\
\hspace*{1.5em}(users[n], operators[k]);
\STATE \hspace*{1.5em}if (scheduled) \{ 
\STATE \hspace*{2em}(uint duration, uint bandwidth) = \hspace*{2em}getServiceParameters(users[n], operators[k]); 
\STATE \hspace*{2em}settleService(users[n], operators[k], duration, \hspace*{2em}bandwidth); \}\}\}\}
\STATE \textit{// Service settlement} 
\STATE \textbf{function settleService}(address user, address operator, uint duration, uint bandwidth) public \{ 
\STATE \hspace*{0.5em}uint cost = calculateCost(operator, duration, bandwidth); 
\STATE \hspace*{0.5em}processPayment(user, operator, cost); 
\STATE \hspace*{0.5em}emit PaymentProcessed(user, operator, cost); \}\} 
 \end{algorithmic} \end{algorithm}
\begin{itemize}
\item The status reporting contract (see Algorithm 2) processes real-time
channel state updates from registered users. The contract validates
incoming CSI through the validCSI function before processing, rejecting
reports with invalid data formats. Approved reports undergo rate estimation
through the rateEstimation function before being permanently recorded
in the operatorUserRates mapping, with successful updates triggering
ReportSubmitted events.
\item The scheduling contract (see Algorithm 3) implements the BC-PFS algorithm
to execute user scheduling. The contract is initiated by evaluating
all possible user-operator pairs through a two-stage process. First,
it calculates priority scores for each pair by dividing the user's
instantaneous transmission rate by their historical throughput average,
simultaneously updating the schedulingMatrix. Each operator then selects
the highest-priority user via \eqref{eq:scheduling criteria}. Second,
each user's throughput record is updated with newly allocated rates
according to \eqref{eq:eq11}. Upon selection, the contract notifies
participants of scheduling results and emits Scheduled events to provide
an immutable on-chain record of all allocation decisions.
\item The settlement contract (see Algorithm 4) performs financial settlement
based on the scheduling results. The process is initiated through
the processSettlement function, which systematically scans the schedulingMatrix
to identify active allocations. Each validated scheduling transaction
triggers the core settleService function, which dynamically computes
service costs using operator-specific pricing metrics in calculateCost.
The contract automatically initiates fund transfers via processPayment
while broadcasting transaction confirmation through PaymentProcessed
events.
\end{itemize}

\section{Performance Analysis\label{sec:Performance-Analysis}}

\subsection{Analysis on Throughput\label{subsec:User-average-throughput}}

In this section, we would like to quantitatively evaluate the performance
of the proposed BC-PFS in O-RAN through rigorous mathematical analysis.
The PFS framework exhibits well-established weak convergence properties
\cite{Zhou2011}. The following lemma characterizes the asymptotic
behavior of PFS:
\begin{lem}
Under stationary user rate conditions, we define $\mu$ and $r$ as
the canonical representations of $\mu\left(t\right)$ and $r\left(t\right),$
respectively. The user throughput $\mu_{nk}\left(t\right)$ generated
by the scheduling algorithm in \eqref{eq:scheduling criteria} and
\eqref{eq:eq11} weakly converges to the limit point of the following
ODE system.\label{lem:=00591A=009891=006BB5ODE}
\begin{equation}
\dot{\mu}_{nk}=\bar{h}_{nk}\left(\boldsymbol{\mu}\right)-\mu_{nk},\:n=1,...,KN,k=1,...,K,
\end{equation}
where $\boldsymbol{\mu}=\left\{ \mu_{nk},\:n=1,...,KN,k=1,...,K\right\} $
and $\bar{h}_{nk}\left(\boldsymbol{\mu}\right)$ represents the stationary
expectation of the rate provided by operator $k$ to user $n$ under
the event $\frac{r_{nk}}{\mu_{n}}>\frac{r_{mk}}{\mu_{m}},\forall m\neq n$:
\begin{align}
\bar{h}_{nk}\left(\boldsymbol{\mu}\right)= & E\left[r_{nk}\mid\frac{r_{nk}}{\mu_{n}}>\frac{r_{mk}}{\mu_{m}},\forall m\neq n,m\leq KN\right].
\end{align}
\end{lem}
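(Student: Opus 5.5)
This is a stochastic approximation argument in the style of Kushner--Whiting \cite{Kushner2004}, applied to the per-operator components of the throughput. First I would disaggregate the update \eqref{eq:eq11} into operator-wise recursions,
\begin{equation*}
\mu_{nk}(t+1)=\mu_{nk}(t)+\alpha\left[r_{nk}(t+1)I_{nk}(t+1)-\mu_{nk}(t)\right],
\end{equation*}
with $\mu_{n}(t)=\sum_{k}\mu_{nk}(t)$. By linearity, summing over $k$ recovers \eqref{eq:eq11}. This fixed-step recursion has step size $\alpha$. Define the noise-free drift $Y_{nk}(t)=r_{nk}(t+1)I_{nk}(t+1)-\mu_{nk}(t)$. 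The decision \eqref{eq:scheduling criteria} gives
\begin{equation*}
I_{nk}(t+1)=\mathbf{1}\left\{\frac{r_{nk}(t+1)}{\mu_{n}(t)}>\frac{r_{mk}(t+1)}{\mu_{m}(t)},\ \forall m\neq n\right\},
\end{equation*}
where ties have probability zero because the rates are continuous (Rayleigh). Hence
\begin{equation*}
E\left[r_{nk}I_{nk}\mid\boldsymbol{\mu}(t)=\boldsymbol{\mu}\right]=\bar h_{nk}(\boldsymbol{\mu}),
\end{equation*}
where I read $\bar h_{nk}$ as the expectation of $r_{nk}$ restricted to the winning event, i.e. $E[r_{nk}\mathbf 1\{\cdot\}]$. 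I would note this interpretation explicitly. The drift therefore depends on the full state only through the aggregates $\mu_n$.

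Next I would form the piecewise-constant interpolation $\boldsymbol{\mu}^{\alpha}(\tau)=\boldsymbol{\mu}(\lfloor\tau/\alpha\rfloor)$ and verify the hypotheses of the weak convergence theorem for constant-step algorithms (Kushner--Yin, Thm.~8.2.2). Tightness follows from uniform integrability of $\{r_{nk}(t)\}$, which holds because Rayleigh rates have finite second moments by \eqref{eq:variance}. Also, $\mu_{nk}$ stays in a bounded set in mean because it is a convex combination of past rates. The rates are stationary and mutually independent across operators, and they are either i.i.d. across slots or at least ergodic with mixing. So the averaging condition holds:
\begin{equation*}
\frac{1}{m}\sum_{t=l}^{l+m-1}E_{l}\left[Y_{nk}(t)\right]\to\bar h_{nk}(\boldsymbol{\mu})-\mu_{nk},
\end{equation*}
in probability as $m\to\infty$ and $\alpha\to0$, uniformly on compact sets of $\boldsymbol{\mu}$. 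The theorem then yields that any weak limit of $\boldsymbol{\mu}^{\alpha}(\cdot)$ solves $\dot\mu_{nk}=\bar h_{nk}(\boldsymbol{\mu})-\mu_{nk}$. For $\alpha\to0$ and $t\to\infty$ with $t\alpha\to\infty$, the iterates concentrate on the limit set of this ODE.

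To identify that limit set as a point, I would show that the ODE has a unique globally asymptotically stable equilibrium. Summing over $k$ gives an autonomous ODE in $(\mu_n)$, namely
\begin{equation*}
\dot\mu_n=\sum_k\bar h_{nk}-\mu_n.
\end{equation*}
This is the gradient-like flow for the concave utility \eqref{eq:9}. Following \cite{Kushner2004}, I would take the Lyapunov function $V=-\sum_n\ln\mu_n$ and show $\dot V\le0$, with equality only at the PF optimum. Once the aggregates converge, each $\mu_{nk}$ obeys a linear ODE driven by a converging input, so it converges too.

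The main obstacle is continuity of $\bar h_{nk}$ in $\boldsymbol{\mu}$ together with control near the boundary $\mu_n\to0$, where the ratio blows up. I would handle this in two steps. First, continuity follows from the continuous rate distributions via dominated convergence. Second, at the boundary a user with $\mu_n\approx0$ wins with probability near one, so the drift pushes $\mu_n$ away from zero. That confines the trajectories to a compact set bounded away from the boundary, which is what the theorem needs.
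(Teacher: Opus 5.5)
The paper gives no proof of this lemma. It imports the result from \cite{Zhou2011}, which extends the stochastic-approximation analysis of \cite{Kushner2004}, and your route (operator-wise recursion, constant-step weak convergence, stability of the mean ODE) is that same argument. Your reading of $\bar h_{nk}$ as $E[r_{nk}\mathbf{1}\{\cdot\}]$, rather than a conditional expectation, is the right one. It is what the paper's own $K=2$, $N=1$ ODE in Section~\ref{subsec:ODE-Approach} uses, and the literal conditional expectation would put the equilibrium in the wrong place. Under i.i.d.\ block fading you even have $E_t[r_{nk}(t+1)I_{nk}(t+1)]=\bar h_{nk}(\boldsymbol{\mu}(t))$ exactly, so the noise is a martingale difference and the averaging step is immediate.

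\textbf{The Lyapunov step fails as written.} For $V=-\sum_n\ln\mu_n$ you get $\dot V=KN-\sum_n\sum_k\bar h_{nk}(\boldsymbol{\mu})/\mu_n$, which is positive whenever the $\mu_n$ are large. Already for $K=N=1$, $\dot V=1-\bar r_1/\mu_1>0$ for $\mu_1>\bar r_1$. The flow is not a gradient flow of the utility; it is a conditional-gradient flow.

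\textbf{The missing idea} is the property that makes $V$ work. Because each operator picks $\arg\max_n r_{nk}/\mu_n$ slot by slot, the aggregate target $h^\star_n(\boldsymbol{\mu})=\sum_k\bar h_{nk}(\boldsymbol{\mu})$ maximizes the linear functional $\sum_n h_n/\mu_n$ over the convex compact set $\mathcal{R}$ of achievable mean-rate vectors. From this the argument goes through in three steps.
\begin{enumerate}
\item Writing $\boldsymbol{\mu}(t)=e^{-t}\boldsymbol{\mu}(0)+\int_0^t e^{-(t-s)}h^\star(\boldsymbol{\mu}(s))\,ds$ shows that $\mathcal{R}$ is forward invariant and attracts every trajectory exponentially. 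Hence every limit set lies in $\mathcal{R}$.
\item For $\boldsymbol{\mu}\in\mathcal{R}$ you have $\sum_n h^\star_n(\boldsymbol{\mu})/\mu_n\ge\sum_n\mu_n/\mu_n=KN$, i.e.\ $\dot V\le0$. Equality holds iff $\sum_n(h_n-\mu_n)/\mu_n\le0$ for all $h\in\mathcal{R}$, which is Kelly's condition for the unique maximizer of $\sum_n\ln\mu_n$ over $\mathcal{R}$.
\item LaSalle applied on the limit set then gives the unique, initial-condition-independent limit point that the paper relies on right after the lemma.
\end{enumerate}

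\textbf{The boundary heuristic also needs repair.} When several $\mu_n$ are near zero, none of them need win with probability near one. Use instead that $\bar h_{nk}$ is homogeneous of degree zero and nondecreasing in each $\mu_m$, $m\neq n$. Then if $\mu_n=\min_m\mu_m$ you get $\bar h_{nk}(\boldsymbol{\mu})\ge\bar h_{nk}(\mathbf{1})>0$, so $\{\min_n\mu_n\ge c\}$ is forward invariant for small $c>0$. This keeps $V$ finite and $\bar h$ continuous where the weak-convergence theorem and LaSalle need them. (\cite{Kushner2004} avoids this issue with small positive offsets in the denominators.)
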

Notably, the canonical representations $\mu$ and $r$ characterize
the time-independent essential features of these quantities by abstracting
away their temporal dependencies. This convergence result provides
fundamental insights. First, the weak convergence property demonstrates
that the stochastic BC-PFS process ultimately follows the deterministic
ODE solution, with the distribution of $\mu_{nk}\left(t\right)$ converging
to the ODE's trajectory despite short-term random fluctuations. Therefore,
the long-term average throughput $\bar{\mu}_{nk}$ equals the ODE's
limit point and $\mu_{nk}\left(t\right)$ converges to $\bar{\mu}_{nk}$
when $t\rightarrow\infty$. Second, the ODE system admits a unique
limit point, independent of initial conditions, which implies the
uniqueness of $\bar{\mu}_{nk}$. This convergence enables our subsequent
steady-state analysis.

From Lemma \ref{lem:=00591A=009891=006BB5ODE}, we take expectations
on both sides of \eqref{eq:eq11}, which yields
\begin{align}
 & E\left[\mu_{n}\left(t+1\right)\right]=\nonumber \\
 & \left(1-\alpha\right)E\left[\mu_{n}\left(t\right)\right]+\alpha\sum^{K}_{k=1}E\left[r_{nk}\left(t+1\right)I_{nk}\left(t+1\right)\right].
\end{align}
At the long-term steady state, this simplifies to
\begin{align}
\bar{\mu}_{n}= & \underset{t\rightarrow\infty}{\text{lim}}E\left[\mu_{n}\left(t+1\right)\right]\nonumber \\
= & \underset{t\rightarrow\infty}{\text{lim}}\sum^{K}_{k=1}E\left[r_{nk}\left(t+1\right)I_{nk}\left(t+1\right)\right].\label{eq:12}
\end{align}
By using Bayes' theorem, the average throughput can be expanded as:
\begin{align}
\bar{\mu}_{n}= & \underset{t\rightarrow\infty}{\text{lim}}\sum^{K}_{k=1}\int^{\infty}_{0}\left(xf_{r_{nk}}\left(x\right)\right.\nonumber \\
 & \left.\cdot\textrm{Pr}\left(I_{nk}\left(t+1\right)=1\left|r_{nk}\left(t+1\right)=x\right.\right)\right)dx.\label{eq:16}
\end{align}
Moreover, under the scheduling rule \eqref{eq:scheduling criteria},
we have:
\begin{align}
 & \bar{\mu}_{n}=\label{eq:18}\\
 & \underset{t\rightarrow\infty}{\text{lim}}\sum^{K}_{k=1}\int^{\infty}_{0}xf_{r_{nk}}\left(x\right)\prod^{KN}_{\substack{m=1,\\
m\ne n
}
}\textrm{Pr}\left(\frac{x}{\mu_{n}\left(t\right)}>\frac{r_{mk}\left(t+1\right)}{\mu_{m}\left(t\right)}\right)dx.\nonumber 
\end{align}
By taking the limit, \eqref{eq:18} can be written as:
\begin{equation}
\bar{\mu}_{n}=\sum^{K}_{k=1}\int^{\infty}_{0}xf_{r_{nk}}\left(x\right)\prod^{KN}_{\substack{m=1,\\
m\ne n
}
}F_{r_{mk}}\left(\frac{\bar{\mu}_{m}}{\bar{\mu}_{n}}x\right)dx,\label{eq:33}
\end{equation}
where $f_{r}\left(x\right)$ and $F_{r}\left(x\right)$ represent
the probability density function and the cumulative distribution function
of user rate $r$, respectively. The mathematical structure of \eqref{eq:33}
reveals the intricate relationship between average throughput and
instantaneous rates in BC-PFS. For Rayleigh fading channels, \cite{Liu2011}
established performance bounds under proportional fairness as follows.
\begin{lem}
\label{lem:=005927=005C0F=005173=007CFB}In Rayleigh fading environments,
given two users $i$ and $j$ with $\bar{r}_{i}\leq\bar{r}_{j}$,
under the definition of proportional fairness, $\frac{\sigma_{j}}{\sigma_{i}}\leq\frac{\bar{\mu}_{j}}{\bar{\mu}_{i}}\leq\frac{\bar{r}_{j}}{\bar{r}_{i}}$,
where $\bar{\mu}_{i}$ and $\bar{\mu}_{j}$ represent the average
throughput of users $i$ and $j$, respectively.
\end{lem}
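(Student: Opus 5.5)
The plan is to work from the steady-state fixed-point characterization \eqref{eq:33}, specialised to the relevant competition (a single operator's channel, as in \cite{Liu2011}). I would rewrite it in normalized form. Set $u_n=r_n/\bar{\mu}_n$ and let $W_n=\max_{m\neq n}u_m$ denote the strongest competitor seen by user $n$. Dividing $\bar{\mu}_n=E\left[r_n\mathbf{1}\{u_n>W_n\}\right]$ by $\bar{\mu}_n$ gives the balance condition
\begin{equation*}
E\left[u_n\mathbf{1}\{u_n>W_n\}\right]=1\quad\text{for every user }n .
\end{equation*}
So at the proportional-fair point every user contributes the same normalized amount. The lemma then becomes a statement about the scale $c=\bar{\mu}_j/\bar{\mu}_i$ that makes users $i$ and $j$ equally balanced.

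Next I would isolate the pair $(i,j)$. Let $Z=\max_{m\neq i,j}u_m$, which is independent of $r_i$ and $r_j$. Then
\begin{align*}
E\left[u_i\mathbf{1}\{u_i>\max(u_j,Z)\}\right] &= 1,\\
E\left[u_j\mathbf{1}\{u_j>\max(u_i,Z)\}\right] &= 1,
\end{align*}
with $u_j=r_j/(c\,\bar{\mu}_i)$. The key monotonicity fact is that the left-hand side of the $j$-equation is strictly decreasing in $c$, while that of the $i$-equation is strictly increasing in $c$. Hence the balancing value $c^{*}$ is unique (consistent with the uniqueness noted after Lemma \ref{lem:=00591A=009891=006BB5ODE}). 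Both bounds can then be proved by contradiction: assume $c>\bar{r}_j/\bar{r}_i$, or $c<\sigma_j/\sigma_i$, and show the $j$-term is then strictly smaller, or strictly larger, than the $i$-term.

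For the Rayleigh model I would write $r_n=B\log_2(1+\gamma_n X)$ with $X\sim\mathrm{Exp}(1)$, and compare the normalized shapes $r_n/\bar{r}_n$ and $r_n/\sigma_n$ as $\gamma_n$ grows.
\begin{itemize}
\item \emph{Upper bound.} If $c=\bar{r}_j/\bar{r}_i$, then $u_i$ and $u_j$ have equal means. Because $\log(1+\gamma x)$ becomes relatively less variable as $\gamma$ increases, $r_j/\bar{r}_j$ should be smaller than $r_i/\bar{r}_i$ in the convex order. The functional $X\mapsto E\left[X\mathbf{1}\{X>\max(Y,Z)\}\right]$ rewards upper-tail spread, so user $j$ gets the smaller share. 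By monotonicity in $c$, the balancing value must satisfy $c^{*}\le\bar{r}_j/\bar{r}_i$.
\item \emph{Lower bound.} If $c=\sigma_j/\sigma_i$, the two normalized rates have equal standard deviations. Now user $j$ has the larger mean-to-deviation ratio, which should give it the larger share. This forces $c^{*}\ge\sigma_j/\sigma_i$.
\end{itemize}
In both cases the claimed direction follows from the monotonicity step above.

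The main obstacle is the stochastic-ordering step. One part is proving that $\gamma\mapsto B\log_2(1+\gamma X)$ produces normalized variables that are ordered in a sense strong enough for the functional above, for instance via a single-crossing property of the densities of $r/\bar{r}$ (and of $r/\sigma$). The other part is handling the coupling between $u_i$ and $u_j$ inside the indicator. I would first try a single-crossing argument on the CDFs, $F_{r_j}(c x)$ versus $F_{r_i}(x)$, applied directly to the integral form \eqref{eq:33}. If that fails, I would fall back on the linear low-SNR approximation together with a direct exponential-integral computation, as in \cite{Liu2011}.
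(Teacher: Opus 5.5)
The paper does not prove this lemma; it quotes it from \cite{Liu2011}, so your argument has to stand on its own. There is a genuine gap in it.

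\textbf{What works.} The outer shell is sound: the normalized balance condition, freezing $\bar{\mu}_i$ and the independent competitor $Z$, the opposite monotonicity of the two shares in $c$, and the contradiction at $c=\bar{r}_j/\bar{r}_i$ and $c=\sigma_j/\sigma_i$.

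\textbf{The gap.} All of the lemma's content sits in the two comparison claims, which you only assert (``should''). The mechanism you give for them is not valid. Put $X=r_i/\bar{r}_i$ and $Y=r_j/\bar{r}_j$. At $c=\bar{r}_j/\bar{r}_i$ you have $u_i=aX$ and $u_j=aY$ with $a=\bar{r}_i/\bar{\mu}_i$. The law of $Z$ is tied to the unknown fixed point, and every CDF is a mixture of step functions. So the natural target is
\[
E\left[Y\mathbf{1}\{Y>\max(X,s)\}\right]\le E\left[X\mathbf{1}\{X>\max(Y,s)\}\right]\quad\text{for all } s\ge 0 .
\]
Convex order does not give this, not even for $s=0$. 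Let $Y\equiv 1$. Let $X$ equal $1-\epsilon$ with probability $1-p$, and $L=\bigl(1-(1-\epsilon)(1-p)\bigr)/p$ with probability $p$. Then $EX=EY=1$ and $Y\le_{\mathrm{cx}}X$. But the left side is $1-p$ and the right side is $p+\epsilon(1-p)$, e.g.\ $0.9$ versus $0.19$ for $p=\epsilon=0.1$. So the claim that the share functional ``rewards upper-tail spread'' is false in general. You need a property specific to $B\log_2(1+\gamma\xi)$ with $\xi\sim\mathrm{Exp}(1)$, and the proposal never identifies one.

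You flag this as the main obstacle, but your suggested repairs cannot work:
\begin{itemize}
\item \emph{Upper bound.} Two distinct densities with equal means must cross at least twice, since a single crossing would make one CDF dominate the other. So single crossing of the densities of $r/\bar{r}$ is impossible. Single crossing of the CDFs only yields the convex order, which was just shown to be insufficient.
\item \emph{Lower bound.} First-order dominance of $r_j/\sigma_j$ over $r_i/\sigma_i$ would in fact suffice. This is because $a\mathbf{1}\{a>b\}H(a)-b\mathbf{1}\{b>a\}H(b)$ is nondecreasing in $a$ for any nondecreasing $H$. But dominance fails here. In $\gamma$, $\sigma(\gamma)$ increases while $\sigma(\gamma)/\gamma$ decreases. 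Hence the ratio of corresponding quantiles of $r_j/\sigma_j$ and $r_i/\sigma_i$ tends to $(\gamma_j/\sigma_j)/(\gamma_i/\sigma_i)>1$ as $\xi\to 0$, and to $\sigma_i/\sigma_j<1$ as $\xi\to\infty$. The distributions cross, so no single-crossing-of-densities argument exists there either.
\end{itemize}

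\textbf{What is missing.} You need a genuinely Rayleigh-specific computation. One route: show that $x\bigl(F_Y f_X-F_X f_Y\bigr)(x)$ changes sign exactly once, from negative to positive, and separately prove the $s=0$ case. Together these give the display for every $s$. An analogous statement is needed for the $r/\sigma$ normalization.

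\textbf{The fallback does not prove the lemma.} The linear low-SNR model only covers a degenerate case. There $r_n/\bar{r}_n\sim\mathrm{Exp}(1)$ for every $n$, so $\sigma_n=\bar{r}_n$. Both bounds then collapse to the equality $\bar{\mu}_j/\bar{\mu}_i=\bar{r}_j/\bar{r}_i$. That is the only form in which the paper actually uses the lemma, in \eqref{eq:equal}. It says nothing about the stated inequalities for the Shannon rate with $\sigma_n$ defined by \eqref{eq:variance}.
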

These bounds demonstrate that average throughput ratios are constrained
by both average rate differences and channel variability. Lemma \ref{lem:=005927=005C0F=005173=007CFB}
can be extended to multi-operator scenarios. We assume that user $n$'s
average rate $\bar{r}_{nk}$ provided by operator $k$ is mainly determined
by the user itself. This assumption is realistic when the large-scale
fading depending on the user's location is dominant over the channel
quality. That is, the transmission rates provided by different operators
$\bar{r}_{nk}$ exhibit minor fluctuations around user $n$'s average
rate over $K$ networks $\frac{\bar{r}_{n}}{K}$. According to our
assumption and Lemma \ref{lem:=005927=005C0F=005173=007CFB}, under
the linear rate approximation model over Rayleigh fading, the relationship
between the average throughput of different users satisfies
\begin{equation}
\frac{\bar{\mu}_{n}}{\bar{\mu}_{m}}=\frac{\bar{r}_{n}}{\bar{r}_{m}}.\label{eq:equal}
\end{equation}
This equality provides a theoretical foundation for deriving \eqref{eq:33},
and reduces the computational complexity of solving inter-network
scheduling optimization problems by directly relating average throughput
ratios to channel rate. For user $n$, the average throughput $\bar{\mu}_{n}$
expressed in \eqref{eq:33} can be rewritten as:
\begin{align}
 & \bar{\mu}_{n}=\label{eq:15}\\
 & \sum^{K}_{k=1}\int^{\infty}_{0}\frac{\textrm{exp}\left(-\frac{x}{\bar{r}_{nk}}\right)x}{\bar{r}_{nk}}\prod^{KN}_{\substack{m=1,\\
m\ne n
}
}\left(1-\textrm{exp}\left(-\frac{x\bar{r}_{m}}{\bar{r}_{mk}\bar{r}_{n}}\right)\right)dx.\nonumber 
\end{align}

The result in \eqref{eq:15} provides a semi-closed-form solution
to describe the average throughput for the BC-PFS in O-RAN. We would
like to further expand the integral into a summation form via the
inclusion-exclusion principle, yielding a more precise expression
at the cost of higher computational complexity:
\begin{align}
\bar{\mu}_{n}= & \sum^{K}_{k=1}\frac{1}{\bar{r}_{nk}}\int^{\infty}_{0}\left(\textrm{exp}\left(-\frac{x}{\bar{r}_{nk}}\right)x\sum_{\mathcal{S}\in\mathcal{P}\left(\mathcal{A}_{n}\right)}\left(-1\right)^{\left|\mathcal{S}\right|}\right.\nonumber \\
 & \left.\cdot\textrm{exp}\left(-x\sum_{m\in\mathcal{S}}\frac{\bar{r}_{m}}{\bar{r}_{mk}\bar{r}_{n}}\right)\right)dx,
\end{align}
where $\mathcal{A}_{n}=\left\{ 1,...,KN\right\} \setminus\left\{ n\right\} $
denotes the complete set of all user indices excluding user $n$ and
$\mathcal{P}\left(\mathcal{A}_{n}\right)$ represents its power set
containing all possible subsets.\footnote{For instance, $\mathcal{P}\left(\left\{ 1,2\right\} \right)=$$\left\{ \emptyset,\left\{ 1\right\} ,\left\{ 2\right\} ,\left\{ 1,2\right\} \right\} $.}
Meanwhile, $\mathcal{S}$ is an arbitrary subset from an element in
$\mathcal{P}\left(\mathcal{A}_{n}\right)$, which is a user subset,
and $\left|\mathcal{S}\right|$ indicates the number of elements in
$\mathcal{S}$. Upon substituting the expanded summation form into
the integral, the average throughput is derived as:
\begin{align}
\bar{\mu}_{n}= & \sum^{K}_{k=1}\sum_{\mathcal{S}\in\mathcal{P}\left(\mathcal{A}_{n}\right)}\frac{\left(-1\right)^{\left|\mathcal{S}\right|}}{\bar{r}_{nk}}\left(\frac{1}{\bar{r}_{nk}}+\sum_{m\in\mathcal{S}}\frac{\bar{r}_{m}}{\bar{r}_{mk}\bar{r}_{n}}\right)^{-2}.\label{eq:solution1}
\end{align}
This formulation provides a complete characterization of user average
throughput in the blockchain-enhanced O-RAN, and quantifies how multi-operator
O-RAN aggregation affects system performance. The accuracy of \eqref{eq:15}
and \eqref{eq:solution1} is verified by the simulations in Section
\ref{sec:Simulation-and-analysis}. Note that for large-scale scenarios
with numerous operator networks or users per network, \eqref{eq:solution1}
may still require large computational overheads.

\subsection{ODE Approach\label{subsec:ODE-Approach}}

Remark that the above results can also be derived equivalently via
ODE. The user throughput $\mu_{nk}\left(t\right)$ in the BC-PFS weakly
converges to the limiting solution of a coupled ODE system. We will
show the derivation in the simplest non-trivial case with two operator
networks and one user per network, i.e., $K=2$ and $N=1$. The expected
instantaneous rates decompose as $\bar{r}_{1}=\bar{r}_{11}+\bar{r}_{12}$
and $\bar{r}_{2}=\bar{r}_{21}+\bar{r}_{22}$, while the total throughput
components satisfy $\mu_{1}=\mu_{11}+\mu_{12}$ and $\mu_{2}=\mu_{21}+\mu_{22}$.
Under the linear rate model assumption, by computing $\bar{h}\left(\mu\right)$
according to Lemma \ref{lem:=00591A=009891=006BB5ODE}, the ODE system
has the following form:
\[
\begin{cases}
\dot{\mu}_{11}=\bar{r}_{11}-\frac{\frac{\mu^{2}_{1}}{\bar{r}_{11}}}{\left(\frac{\mu_{1}}{\bar{r}_{11}}+\frac{\mu_{2}}{\bar{r}_{21}}\right)^{2}}-\mu_{11},\\
\dot{\mu}_{12}=\bar{r}_{12}-\frac{\frac{\mu^{2}_{1}}{\bar{r}_{12}}}{\left(\frac{\mu_{1}}{\bar{r}_{12}}+\frac{\mu_{2}}{\bar{r}_{22}}\right)^{2}}-\mu_{12},\\
\dot{\mu}_{21}=\bar{r}_{21}-\frac{\frac{\mu^{2}_{2}}{\bar{r}_{21}}}{\left(\frac{\mu_{1}}{\bar{r}_{11}}+\frac{\mu_{2}}{\bar{r}_{21}}\right)^{2}}-\mu_{21},\\
\dot{\mu}_{22}=\bar{r}_{22}-\frac{\frac{\mu^{2}_{2}}{\bar{r}_{22}}}{\left(\frac{\mu_{1}}{\bar{r}_{12}}+\frac{\mu_{2}}{\bar{r}_{22}}\right)^{2}}-\mu_{22}.
\end{cases}
\]
The equilibrium solutions can be obtained by setting $\dot{\mu}_{nk}=0$.
By using \eqref{eq:equal}, we derive the expressions for the average
throughput:
\[
\begin{cases}
\bar{\mu}_{1}=\bar{r}_{1}-\left(\frac{\frac{\bar{r}^{2}_{1}}{\bar{r}_{11}}}{\left(\frac{\bar{r}_{1}}{\bar{r}_{11}}+\frac{\bar{r}_{2}}{\bar{r}_{21}}\right)^{2}}+\frac{\frac{\bar{r}^{2}_{1}}{\bar{r}_{12}}}{\left(\frac{\bar{r}_{1}}{\bar{r}_{12}}+\frac{\bar{r}_{2}}{\bar{r}_{22}}\right)^{2}}\right),\\
\bar{\mu}_{2}=\bar{r}_{2}-\left(\frac{\frac{\bar{r}^{2}_{2}}{\bar{r}_{21}}}{\left(\frac{\bar{r}_{1}}{\bar{r}_{11}}+\frac{\bar{r}_{2}}{\bar{r}_{21}}\right)^{2}}+\frac{\frac{\bar{r}^{2}_{2}}{\bar{r}_{22}}}{\left(\frac{\bar{r}_{1}}{\bar{r}_{12}}+\frac{\bar{r}_{2}}{\bar{r}_{22}}\right)^{2}}\right).
\end{cases}
\]

The above result has the same form as \eqref{eq:solution1}. We can
generalize the derivation to $K$-operator networks by constructing
and solving corresponding ODEs. The generalization to $KN$ users
and $K$ operator networks requires constructing conditional expectations
$\bar{h}_{nk}\left(\mu\right)$ and solving the corresponding high-dimensional
ODE system. Similarly, we can obtain the universal analytical solution:
\[
\begin{cases}
\bar{\mu}_{n}=\sum^{K}_{k=1}\sum_{\mathcal{S}\in\mathcal{P}\left(\mathcal{A}_{n}\right)}\left(-1\right)^{\left|\mathcal{S}\right|}\frac{\frac{\bar{r}^{2}_{n}}{\bar{r}_{nk}}}{\left(\frac{\bar{r}_{n}}{\bar{r}_{nk}}+\sum_{m\in\mathcal{S}}\frac{\bar{r}_{m}}{\bar{r}_{mk}}\right)^{2}},\\
\bar{r}_{n}=\sum^{K}_{k=1}\bar{r}_{nk}.
\end{cases}
\]
This comprehensive solution captures several crucial aspects of the
multi-operator environment in the BC-PFS. The double summation structure
reflects the hierarchical nature of resource competition, where users
contend both within and across operator networks. Besides, the denominator
terms precisely quantify the cumulative impact from competing users,
with each $\frac{\bar{r}_{m}}{\bar{r}_{mk}}$ term representing the
normalized competitive pressure from user $m$ through operator $k$. 

Notably, this result is exactly equivalent to the result in \eqref{eq:solution1},
which establishes theoretical consistency between the probabilistic
and ODE approaches to modeling the BC-PFS in O-RAN.

\subsection{Simplified Solution\label{subsec:Simplified-Solution}}

While the approaches in Sections \ref{subsec:User-average-throughput}
and \ref{subsec:ODE-Approach} provide theoretical assessment, they
involve high-dimensional coupling terms that introduce significant
computational complexity. To develop a more tractable analytical framework,
we would like to simplify $\bar{\mu}_{nk}$ in \eqref{eq:15} for
analyzing the impact of blockchain. By applying variable substitution
$y=\frac{x}{\bar{r}_{nk}}$, $\bar{\mu}_{nk}$ can be transformed
to:
\begin{align}
 & \bar{\mu}_{nk}=\\
 & \bar{r}_{nk}\int^{\infty}_{0}y\textrm{exp}\left(-y\right)\prod^{KN}_{m=1,m\neq n}\left(1-\textrm{exp}\left(-\frac{\bar{r}_{nk}\bar{r}_{m}}{\bar{r}_{mk}\bar{r}_{n}}y\right)\right)dy.\nonumber 
\end{align}
According to the above assumption, we denote $\nu_{n}=\frac{1}{K}\bar{r}_{n}$
as user $n$'s average rate over $K$ networks, and the transmission
rates provided by different operators $\bar{r}_{nk}$ exhibit minor
fluctuations around $\nu_{n}$, i.e., $|\frac{\bar{r}_{nk}-\nu_{n}}{\nu_{n}}|\leq\delta$,
where $\delta$ is the deviation of user rate from the average. As
a result, the average throughput $\bar{\mu}_{nk}$ for user $n$ can
be simplified to $\bar{\mu}^{\textrm{s}}_{nk}$, given by:
\begin{align}
\bar{\mu}_{nk}\approx & \nu_{n}\int^{\infty}_{0}y\textrm{exp}\left(-y\right)\left(1-\textrm{exp}\left(-y\right)\right)^{KN-1}dy\triangleq\bar{\mu}^{\textrm{s}}_{nk}.
\end{align}
By substituting $z=1-\textrm{exp}\left(-y\right)$ with Taylor expansion,
we can obtain the simplified solution $\bar{\mu}^{\textrm{s}}_{nk}$
as:
\begin{align}
\bar{\mu}^{\textrm{s}}_{nk}= & \nu_{n}\int^{1}_{0}\left(-\ln\left(1-z\right)\right)z^{KN-1}dz=\nu_{n}\sum^{\infty}_{i=1}\frac{1}{i\left(i+KN\right)}\nonumber \\
= & \frac{\nu_{n}}{KN}\sum^{\infty}_{i=1}\left(\frac{1}{i}-\frac{1}{i+KN}\right)=\frac{\nu_{n}}{KN}\omega\left(KN\right),
\end{align}
where $\omega\left(KN\right)\triangleq\sum^{KN}_{i=1}\frac{1}{i}$.
Consequently, we obtain the simplified average throughput of user
$n$ in the compact expression:
\begin{equation}
\bar{\mu}^{\textrm{s}}_{n}=\frac{\nu_{n}}{N}\omega\left(KN\right).\label{eq:simplified solution}
\end{equation}
This result in \eqref{eq:simplified solution} shows that the user
throughput of BC-PFS primarily depends on the average user rate $\nu_{n}$,
the number of operator networks $K$, and the number of users per
network $N$. Meanwhile, the factor $\omega\left(KN\right)$ explicitly
quantifies the advantages of resource pooling and inter-network cooperation
in O-RAN as network becomes more open, i.e., the total user number
$KN$ increases. This closed-form solution provides valuable insights
for performance analysis on the BC-PFS across multi-operator networks. 

More importantly, the above simplified user throughput serves as a
performance lower bound in fact. If we take the channel variability
among operator networks into consideration, the rate fluctuation across
different operator networks can provide extra multi-network diversity
gain, analogous to multi-user diversity gain. Consequently, the actual
performance of the BC-PFS in O-RAN will exceed the value in \eqref{eq:simplified solution}.
Our simulation results in Section \ref{sec:Simulation-and-analysis}
illustrate the close approximation of our simplified solution in \eqref{eq:simplified solution}
and also verify \eqref{eq:simplified solution} as a lower bound.

\section{Pooling Effect\label{sec:Blockchain-Pooling-Effect}}

After obtaining the simplified closed-form solution in \eqref{eq:simplified solution},
we would like to quantify the performance of BC-PFS in O-RAN through
a systematic utility analysis. Recall that the PFS is equivalent to
maximizing the network utility $U$\cite{Kushner2004}, given by
\begin{equation}
U\triangleq\sum^{KN}_{n=1}\ln\left(\bar{\mu}_{n}\right).
\end{equation}
For an isolated network where operators perform non-cooperative PFS,
the average throughput per user reduces to $\frac{\nu_{n}}{N}\omega\left(N\right)$.
Thus, the baseline network utility is
\begin{equation}
U_{\textrm{iso}}=\sum^{KN}_{n=1}\ln\left(\frac{\nu_{n}\omega\left(N\right)}{N}\right).
\end{equation}
Through BC-PFS, the average throughput per user is approximated as
$\frac{\nu_{n}}{N}\omega\left(KN\right)$. Consequently, the blockchain-enhanced
cooperative O-RAN achieves a higher network utility, given by
\begin{equation}
U_{\textrm{coop}}=\sum^{KN}_{n=1}\ln\left(\frac{\nu_{n}\omega\left(KN\right)}{N}\right).\label{eq:20}
\end{equation}
We define their gap as the pooling gain $\psi\left(N,K\right)$:
\begin{align}
\psi\left(N,K\right)\triangleq & U_{\textrm{coop}}-U_{\textrm{iso}}=KN\textrm{\ensuremath{\ln\left(\frac{\omega\left(KN\right)}{\omega\left(N\right)}\right)}.}\label{eq:21}
\end{align}
Apparently, the pooling gain $\psi\left(N,K\right)$ is determined
by the numbers of operator networks $K$ and users per network $N$.
Through harmonic number expansion, we can show that the pooling gain
is always positive, i.e.,
\begin{equation}
\psi\left(N,K\right)=KN\ln\left(1+\frac{\sum^{KN}_{i=N+1}\frac{1}{i}}{\omega\left(N\right)}\right)>0.\label{eq:cao}
\end{equation}
Mathematically, we prove that the O-RAN using BC-PFS outperforms the
isolated networks based on non-cooperative PFS. Since the above results
are based on the assumption without considering multi-network diversity,
the simplified solution in \eqref{eq:simplified solution}, in fact,
is a conservative lower bound. In other words, the actual network
utility will be further enhanced by the diversity on the network level,
amplifying the pooling gain beyond $\psi\left(N,K\right)$ in \eqref{eq:cao}.

Mathematically, the following theorem points out that the pooling
gain $\psi\left(N,K\right)$ can be enlarged by both network size
and the number of users. 
\begin{thm}
For $N=1,2,...$ and $K=2,3,...$, the pooling gain $\psi\left(N,K\right)$
is strictly monotonically increasing with respect to the number of
networks $K$ and the number of users per network $N$.\label{thm:monotonicity}
\end{thm}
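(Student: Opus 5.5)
Set $H_M\triangleq\omega(M)=\sum_{i=1}^{M}\frac1i$ and $D_N\triangleq H_{KN}-H_N=\sum_{i=N+1}^{KN}\frac1i$. By \eqref{eq:cao},
\[
\psi(N,K)=KN\ln\left(1+x_N\right),\qquad x_N\triangleq\frac{D_N}{H_N}>0 .
\]
I would prove the two monotonicity claims separately. Monotonicity in $K$ is elementary. Monotonicity in $N$ is the real content, because $H_{KN}/H_N$ actually decreases toward $1$ as $N$ grows. The growth of the prefactor $KN$ must therefore be shown to dominate that decrease.

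\emph{Monotonicity in $K$.} Fix $N\ge1$.
\begin{itemize}
\item The prefactor $KN$ is strictly increasing in $K$.
\item The ratio $H_{KN}/H_N$ is strictly increasing in $K$, since $H_{(K+1)N}-H_{KN}>0$.
\item For $K\ge2$ we have $H_{KN}/H_N>1$, so $\ln(H_{KN}/H_N)$ is positive and strictly increasing in $K$.
\end{itemize}
A product of two positive, strictly increasing factors is strictly increasing, so $\psi(N,K+1)>\psi(N,K)$.

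\emph{Monotonicity in $N$.} Fix $K\ge2$. I would first show that $D_N$ is nondecreasing in $N$. We have
\[
D_{N+1}-D_N=\sum_{j=1}^{K}\frac{1}{KN+j}-\frac{1}{N+1},
\]
and each term satisfies $\frac{1}{KN+j}\ge\frac{1}{K(N+1)}$. Hence $D_{N+1}\ge D_N$. Because $H_{N+1}>H_N$, this gives
\[
x_N-x_{N+1}\le D_{N+1}\left(\frac{1}{H_N}-\frac{1}{H_{N+1}}\right)=\frac{D_{N+1}}{(N+1)H_NH_{N+1}}.
\]
The target inequality $(N+1)\ln(1+x_{N+1})>N\ln(1+x_N)$ can be rewritten as
\[
\ln(1+x_N)-\ln(1+x_{N+1})<\frac{1}{N}\ln(1+x_{N+1}).
\]
Two standard logarithm inequalities handle the two sides:
\begin{itemize}
\item For the left-hand side, $\ln(a/b)\le (a-b)/b$ gives an upper bound of $(x_N-x_{N+1})/(1+x_{N+1})$. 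Using the display above, this is at most $\frac{D_{N+1}}{(N+1)H_NH_{N+1}(1+x_{N+1})}$.
\item For the right-hand side, $\ln(1+x)\ge x/(1+x)$ gives a lower bound of $\frac{x_{N+1}}{N(1+x_{N+1})}=\frac{D_{N+1}}{NH_{N+1}(1+x_{N+1})}$.
\end{itemize}
After cancelling the common positive factor $D_{N+1}/\bigl(H_{N+1}(1+x_{N+1})\bigr)$, it suffices to check $N<(N+1)H_N$. This holds because $H_N\ge1$. Multiplying through by $K$ then yields $\psi(N+1,K)>\psi(N,K)$.

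The main obstacle is this second step. A naive argument fails because $\ln(H_{KN}/H_N)$ decreases in $N$, so the proof must quantify exactly how fast it decreases. The key observations that make the comparison collapse to the trivial inequality $N<(N+1)H_N$ are that $D_N$ does not decrease and that $H_{N+1}-H_N=\frac{1}{N+1}$. If the two logarithm bounds turned out too lossy, my fallback would be to treat $N$ as continuous and differentiate, using the digamma function in place of $H$.
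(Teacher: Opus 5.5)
Your proof is correct. For $K$ it is the paper's argument: the paper writes the forward difference as $N\bigl(\ln\frac{\omega((K+1)N)}{\omega(N)}+K\ln\frac{\omega((K+1)N)}{\omega(KN)}\bigr)$, which is your ``product of two positive increasing factors'' observation written out.

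For $N$, you and the paper use the same key ingredient, $\sum_{j=1}^{K}\frac{1}{KN+j}\ge\frac{1}{N+1}$ (your monotonicity of $D_N$), but you split the forward difference differently.

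The paper anchors at $N$. It writes $\psi(N+1,K)-\psi(N,K)=K\bigl((N+1)\ln\frac{1+a}{1+b}+\ln\frac{\omega(KN)}{\omega(N)}\bigr)$ with $a=\frac{1}{\omega(KN)}\sum_{i=1}^{K}\frac{1}{KN+i}$ and $b=\frac{1}{(N+1)\omega(N)}$. It must then separate the case $a\ge b$, which is trivial, from $a<b$. In the latter case it uses $\ln\frac{1+a}{1+b}>a-b$, which holds only when $a<b$. It ends at $\frac{1}{\omega(KN)}-\frac{1}{\omega(N)}+\ln\frac{\omega(KN)}{\omega(N)}>0$, which tacitly needs $u\mapsto\ln u+\frac1u$ to be increasing on $[1,\infty)$.

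You anchor at $N+1$, writing the difference as $K\bigl(\ln(1+x_{N+1})-N\ln\frac{1+x_N}{1+x_{N+1}}\bigr)$. You then use the unconditional bounds $\ln t\le t-1$ and $\ln(1+x)\ge\frac{x}{1+x}$. This removes the case split and reduces everything to $N<(N+1)H_N$, which holds with a wide margin.

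Your route is therefore slightly cleaner and fully explicit. The paper's keeps the positive base term $\ln\frac{\omega(KN)}{\omega(N)}=\psi(N,K)/(KN)$ visible. Neither needs your digamma fallback. Your motivating remark that $H_{KN}/H_N$ decreases in $N$ is never used in the argument, so it needs no proof.
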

\begin{proof}
We first prove the monotonicity in $K$. For fixed $N\geq1$, consider
the difference with respect to $K$:
\begin{align}
 & \psi\left(N,K+1\right)-\psi\left(N,K\right)\label{eq:34}\\
 & =\left(K+1\right)N\ln\left(\frac{\omega\left(\left(K+1\right)N\right)}{\omega\left(N\right)}\right)-KN\ln\left(\frac{\omega\left(KN\right)}{\omega\left(N\right)}\right)\nonumber \\
 & =N\left(\ln\left(\frac{\omega\left(\left(K+1\right)N\right)}{\omega\left(N\right)}\right)+K\ln\left(\frac{\omega\left(\left(K+1\right)N\right)}{\omega\left(KN\right)}\right)\right).\nonumber 
\end{align}
Since $\omega\left(\cdot\right)$ is an increasing function, both
logarithmic terms in \eqref{eq:34} are positive, which ensures that
$\psi\left(N,K+1\right)-\psi\left(N,K\right)>0$. Hence, $\psi\left(N,K\right)$
strictly increases in $K$. 

Consider the difference of $\psi\left(N,K\right)$ with respect to
$N$:
\begin{align}
 & \psi\left(N+1,K\right)-\psi\left(N,K\right)\\
 & =K\left(\left(N+1\right)\ln\left(\frac{\omega\left(K\left(N+1\right)\right)}{\omega\left(N+1\right)}\right)-N\ln\left(\frac{\omega\left(KN\right)}{\omega\left(N\right)}\right)\right)\nonumber \\
 & =K\left(\left(N+1\right)\ln\left(\frac{1+\frac{\sum^{K}_{i=1}\frac{1}{KN+i}}{\omega\left(KN\right)}}{1+\frac{1}{\left(N+1\right)\omega\left(N\right)}}\right)+\ln\left(\frac{\omega\left(KN\right)}{\omega\left(N\right)}\right)\right).\nonumber 
\end{align}
If $\frac{\sum^{K}_{i=1}\frac{1}{KN+i}}{\omega\left(KN\right)}\geq\frac{1}{\left(N+1\right)\omega\left(N\right)}$,
we can prove the positivity immediately. Otherwise, we have $\ln\left(\frac{1+a}{1+b}\right)=\ln\left(1+\frac{a-b}{1+b}\right)>a-b$
for $0\leq a<b$, and we can further obtain
\begin{align}
 & \psi\left(N+1,K\right)-\psi\left(N,K\right)\\
 & >K\left(\frac{\left(N+1\right)}{\omega\left(KN\right)}\sum^{K}_{i=1}\frac{1}{KN+i}-\frac{1}{\omega\left(N\right)}+\ln\left(\frac{\omega\left(KN\right)}{\omega\left(N\right)}\right)\right)\nonumber \\
 & >K\left(\frac{1}{\omega\left(KN\right)}-\frac{1}{\omega\left(N\right)}+\ln\left(\frac{\omega\left(KN\right)}{\omega\left(N\right)}\right)\right)>0.\nonumber 
\end{align}
Therefore, $\psi\left(N,K\right)$ is also strictly monotonically
increasing with $N$. 
\end{proof}
Theorem \ref{thm:monotonicity} indicates that the pooling gain $\psi\left(N,K\right)$
is always positive and also exhibits monotonic increase in both the
number of networks and users. It points out the positive impact of
cross-operator resource pooling of O-RAN enhanced by BC-PFS, since
BC-PFS helps O-RAN to establish the trust required for reliable coordination
across operators. Moreover, the pooling effect can be further enhanced
in a more open network with a larger size. On the one hand, the monotonicity
with respect to $N$ indicates that, as the number of users per network
increases, the gains from the BC-PFS become more pronounced. This
characteristic is mainly gained from multi-user diversity, while BC-PFS
in O-RAN further amplifies this effect through inter-network collaborative
scheduling. On the other hand, the monotonicity with respect to the
number of networks $K$ reflects the diversity on the network level.
The BC-PFS breaks through the resource limitations of isolated single
networks and enhances the openness of O-RAN.
\begin{figure}
\centering\subfigure[]{\includegraphics[width=0.33\paperwidth]{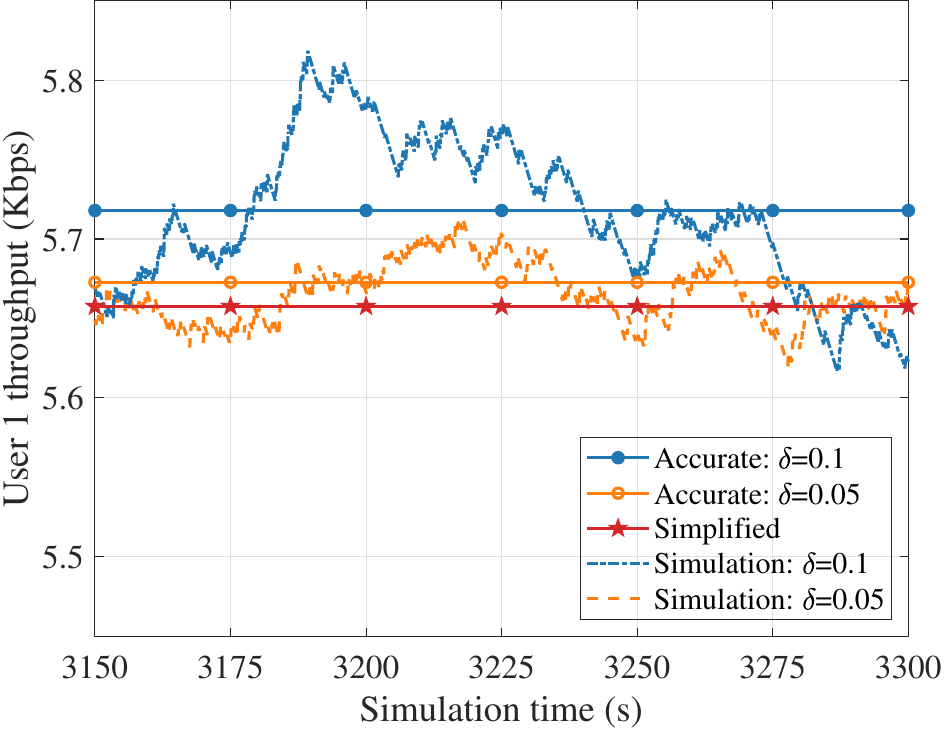}\label{N=2}}\hfil

\subfigure[]{\includegraphics[width=0.33\paperwidth]{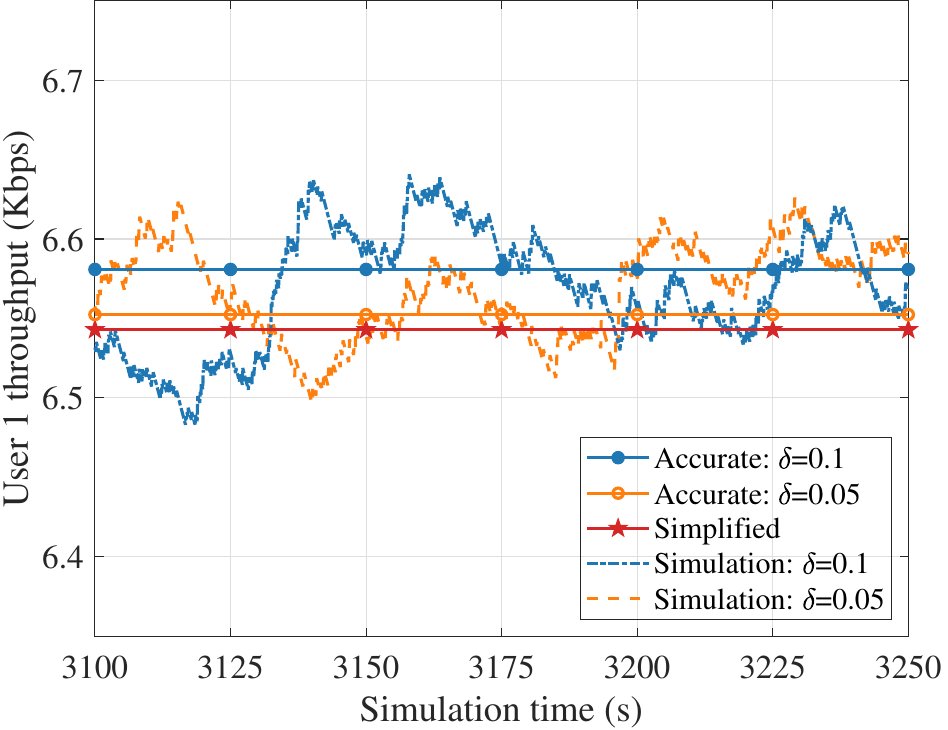}\label{N=10}}

\caption{\foreignlanguage{american}{Simulation, accurate solutions, and simplified solutions of user throughput
of the BC-PFS across $K$ operators. (a) $K=2$. (b) $K=5$.\label{fig:Simulation.-accurate-solutions.}}}
\vspace{-2mm}
\end{figure}

\section{Simulation and analysis\label{sec:Simulation-and-analysis}}

In this section, we present the simulation results to support our
analysis and conclusions. We model the wireless channel as a block-fading
Rayleigh channel with the channel coherence time of 50 ms, and the
scheduling interval is also set to 50\,ms. The throughput update window
is set to $\frac{1}{\alpha}=10000$, which ensures that the tracking
parameter $\alpha$ is sufficiently small to achieve both fair scheduling
and accurate long-term throughput measurements. We adopt the bandwidth
of 1\,MHz for every operator, with average SNR values across all users
ranging from --20\,dB to --10\,dB. 

First, we illustrate the average throughput obtained from the accurate
solution \eqref{eq:solution1}, the simplified solution \eqref{eq:simplified solution},
and the simulation results, to show the impact of the average rate
fluctuation on system performance. Fig. \ref{fig:Simulation.-accurate-solutions.}(a)
and Fig. \ref{fig:Simulation.-accurate-solutions.}(b) present a comparative
analysis of throughput performance for multi-operator networks with
configurations of $K=2$ and $K=5$, respectively. The average throughput
is evaluated under two average rate distributions $\delta=0.05$ and
$\delta=0.1$, with 10 users per network. The accurate analytical
solution in \eqref{eq:solution1} can well characterize the average
throughput from simulation, which validates the accuracy of our theoretical
analysis. Notably, for concentrated rate distributions, i.e., smaller
$\delta$, the simplified solution exhibits negligible deviation from
both simulations and the accurate solutions, confirming its practical
effectiveness under realistic conditions consistent with the assumptions
in Section \ref{subsec:Simplified-Solution}. 
\begin{figure}
\centering\includegraphics[width=0.33\paperwidth]{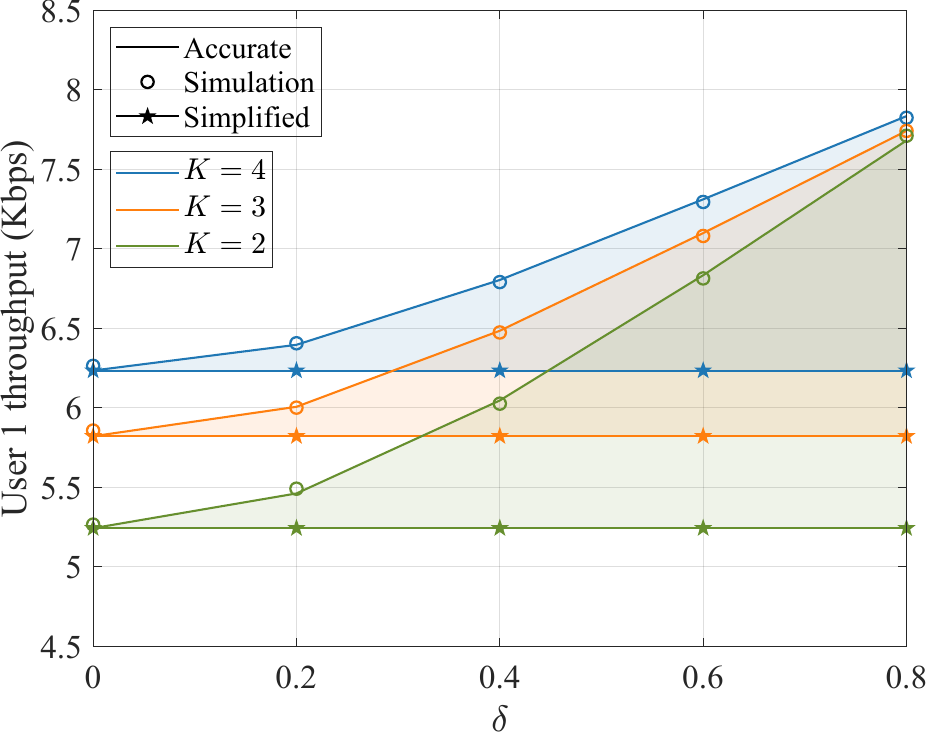}\caption{\foreignlanguage{american}{Comparison of simplified solutions, accurate solutions, and simulations
under different rate distributions.\label{fig:Comparison-of-simplified}}}
\end{figure}
\begin{figure}
\centering\includegraphics[width=0.33\paperwidth]{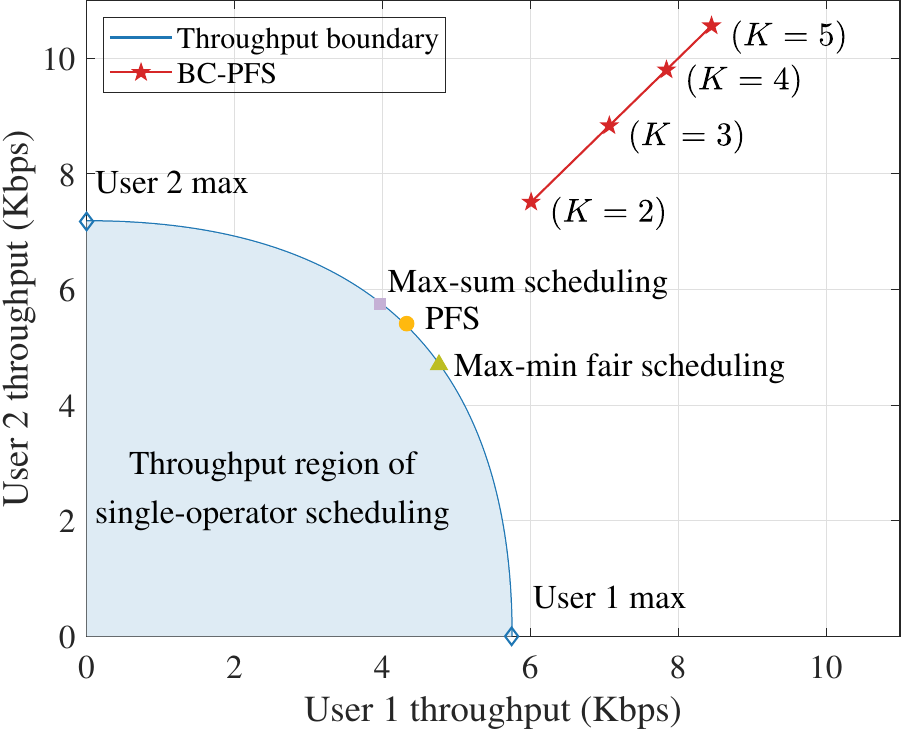}

\caption{\foreignlanguage{american}{Throughput region: case of two users.\label{fig:throughput region}}}
\end{figure}

Fig. \ref{fig:Comparison-of-simplified} illustrates the average throughput
under varying average rate distributions in multi-operator O-RAN,
where each network serves 10 users and the number of operator networks
is set to $K=2$, $K=3$, and $K=4$. The results demonstrate that
under uniform rate distribution with zero fluctuation, i.e. $\delta=0$,
the throughput reaches its lower bound, which aligns with the theoretical
conclusion presented in Section \ref{subsec:Simplified-Solution}.
As the fluctuation amplitude of the average rate increases, the user
throughput generally exhibits an upward trend, reflecting the enhanced
channel diversity gain. Notably, as the number of operator networks
increases, the deviation between the simulation and the simplified
solution diminishes for the same fluctuation amplitude. This arises
from the spatial averaging effect driven by the law of large numbers.
With more operator networks, the actual user rate is tightly clustered
around the average. Therefore, for larger values of $K$, the assumption
of small fluctuations in the average rate is more strongly met, which
improves the accuracy of the simplified solution. This consistency
underscores the robustness of the simplified solution in large-scale
O-RANs, where inter-operator coordination becomes more efficient.

Fig. \ref{fig:throughput region} compares the throughput performance
of traditional scheduling strategies and the BC-PFS for O-RAN. The
light blue area reflects the achievable region of two users' throughput
under the scenario where networks operate independently. Each user
can achieve their maximum throughput under exclusive spectrum access,
with these individual maximums defining the boundary points of the
throughput region. For traditional benchmark schemes, each operator
independently serves its own local users without cross-operator coordination.
The benchmark schemes differ in their scheduling rules. Specifically,
the PFS selects the user with the largest ratio of the instantaneous
rate to the historical average throughput, max-sum scheduling selects
the user with the largest instantaneous rate, and max-min fair scheduling
prioritizes the user with the lowest long-term average throughput.
These strategies all operate along the throughput boundary, rendering
them Pareto optimal, a state where no user's throughput can be improved
without degrading others. In contrast, the BC-PFS in O-RAN enables
trustworthy cooperation among multiple operators through blockchain
and achieves higher average user throughput than single-network scheduling.
This gain is further amplified as the number of cooperating networks
increases, which highlights the superiority of the BC-PFS in facilitating
trustworthy inter-network coordination.
\begin{figure}
\centering\includegraphics[width=0.34\paperwidth]{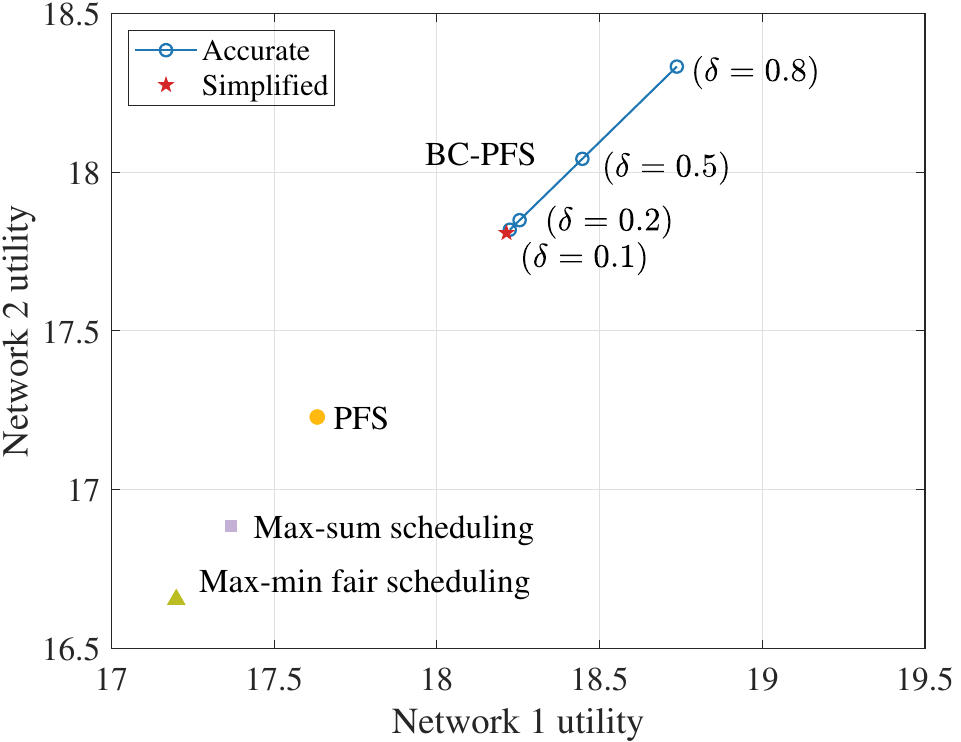}

\caption{\foreignlanguage{american}{Utility region: case of two networks.\label{fig:Utility-region:-case}}}
\vspace{-2mm}
\end{figure}

Concerning the performance advantage of the BC-PFS for O-RAN, Fig.
\ref{fig:Utility-region:-case} illustrates the utility regions achieved
by various scheduling approaches in a two-operator case, where each
operator serves two users. For non-cooperative scheduling strategies,
as theoretically expected, the PFS achieves superior utility performance
compared to both max-sum scheduling and max-min fair scheduling, confirming
its well-known proportional fairness property. More importantly, the
BC-PFS shows significant utility improvement over these conventional
non-cooperative scheduling schemes, demonstrating the advantage of
blockchain-enabled coordination.
\begin{figure}[t]
\centering\subfigure[]{\includegraphics[width=0.33\paperwidth]{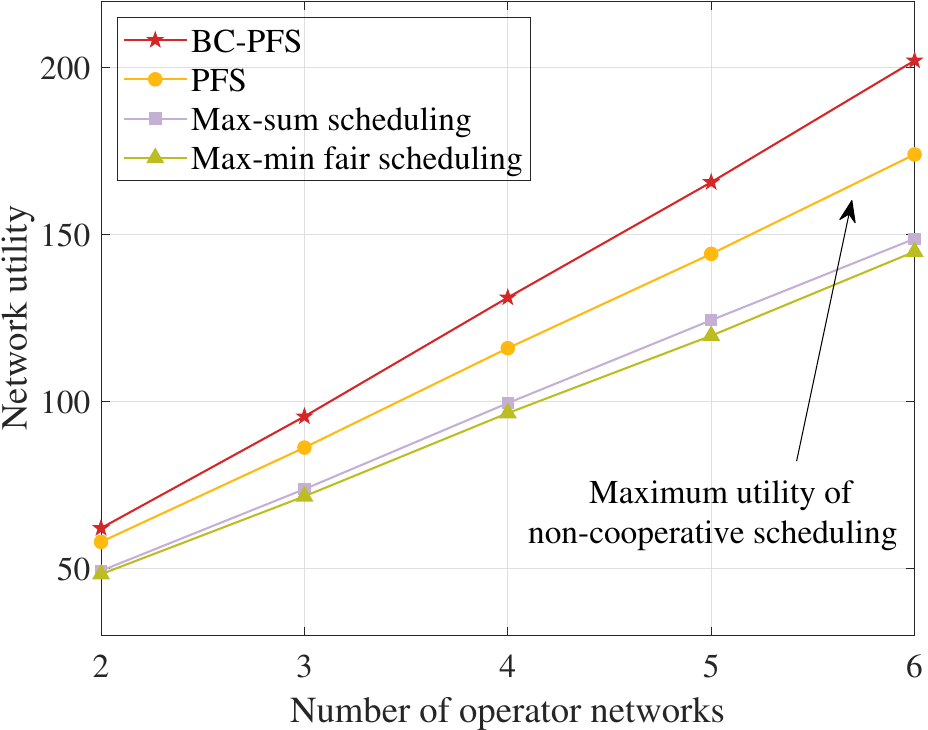}}\hfil

\subfigure[]{\includegraphics[width=0.33\paperwidth]{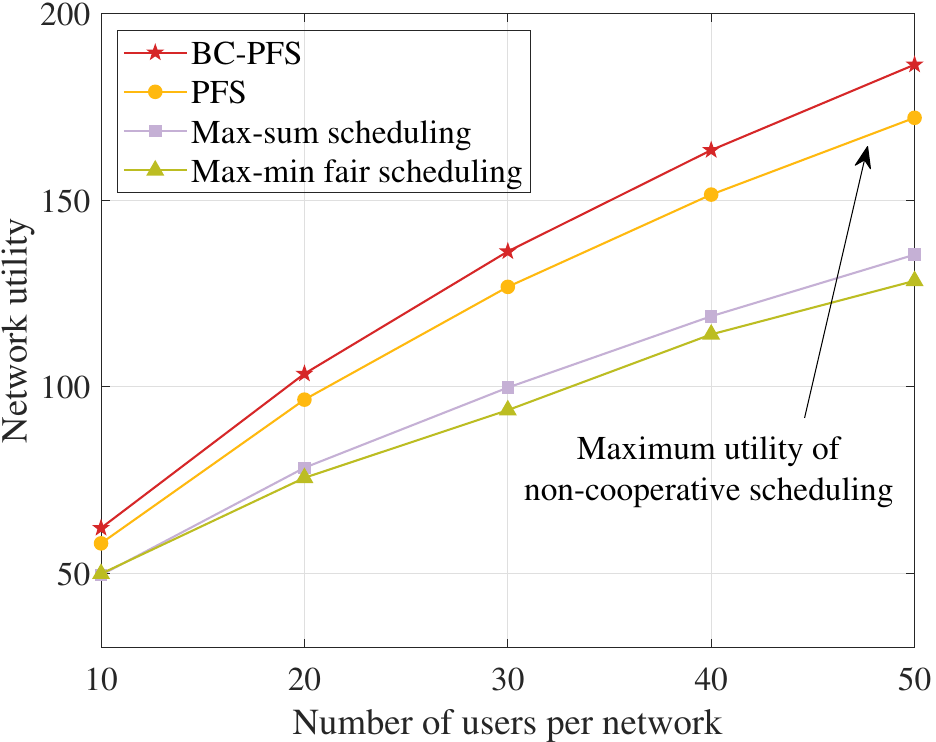}}\caption{\foreignlanguage{american}{Impact of network parameters on network utility. (a) Number of operator
networks. (b) Number of users per network.\label{fig:Comparison-of-network}}}
\vspace{-2mm}
\end{figure}

The simulation further reveals that the network utility of the BC-PFS
in O-RAN grows significantly with increasing rate fluctuation range,
namely $\delta$. This occurs because larger rate variations enhance
multi-operator diversity opportunities, which the BC-PFS effectively
harnesses through its inter-network coordination. This observation
also aligns with our theoretical analysis in Section \ref{sec:Blockchain-Pooling-Effect},
where we established that \eqref{eq:20} represents the lower bound
of network utility, i.e., the red point in Fig. \ref{fig:Utility-region:-case}.
Notably, for non-cooperative scheduling among independent networks,
each operator operates in isolation and their scheduling decisions
cannot exploit diversity from rate variations, resulting in lower
network utility.

Furthermore, Fig. \ref{fig:Comparison-of-network} compares the network
utility achieved by different scheduling approaches under various
network configurations. Fig. \ref{fig:Comparison-of-network}(a) demonstrates
the variation of network utility with the number of operator networks
$K$ under a fixed $N=10$. The results show that in traditional non-cooperative
scheduling frameworks, the PFS consistently maintains optimal utility
performance due to its proportional fairness property, which defines
the performance upper bound for non-cooperative strategies. In contrast,
the BC-PFS in O-RAN demonstrates significant performance advantages
across different $K$ through blockchain-enabled inter-network coordination.
Notably, the performance gap between the BC-PFS and PFS monotonically
increases with the growing number of participating networks, which
validates the scaling effect of multi-network cooperation.

Fig. \ref{fig:Comparison-of-network}(b) focuses on the impact of
the number of users per network $N$ on network utility $U$ with
a fixed $K=2$. Similar to the pattern observed in Fig. \ref{fig:Comparison-of-network}(a),
while the utility of all scheduling schemes improves with increasing
$N$, the BC-PFS consistently maintains a stable performance lead,
and this advantage becomes more pronounced as the user scale expands.
This O-RAN performance enhancement via BC-PFS can be recognized as
the pooling gain.
\begin{figure}
\centering\includegraphics[width=0.33\paperwidth]{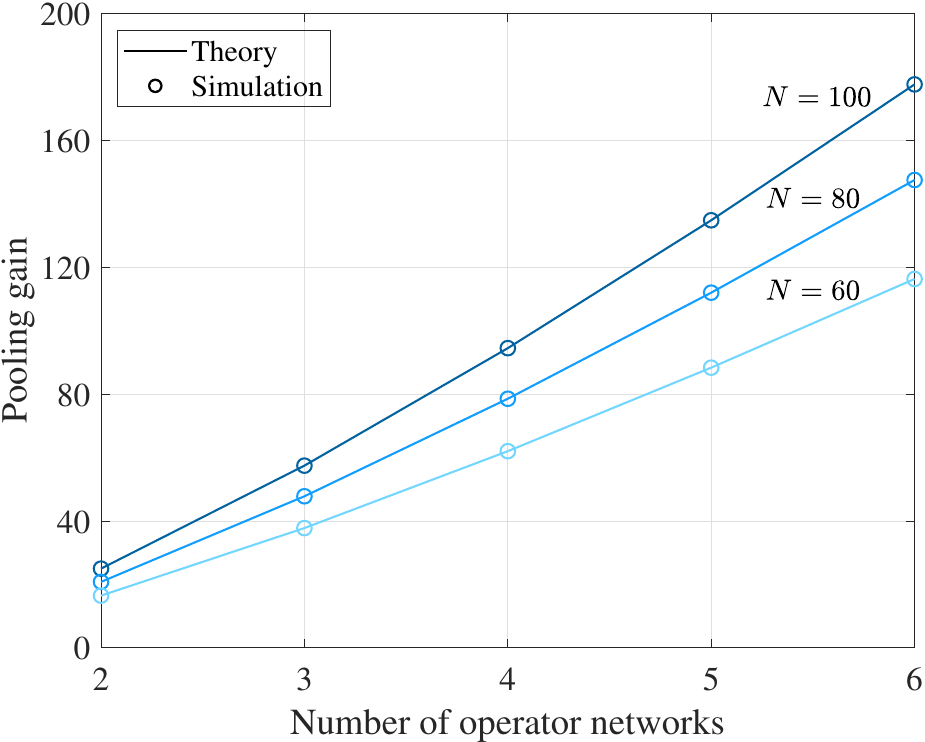}

\caption{\foreignlanguage{american}{Simulation and theoretical results of pooling gain.\label{fig:Simulation-and-theoretical}}}
\vspace{2mm}
\end{figure}

Fig. \ref{fig:Simulation-and-theoretical} presents the pooling gain
under varying numbers of operators $K$ and users per network $N$.
The results demonstrate strong agreement between theoretical and simulated
values. Specifically, when maintaining constant rate fluctuation conditions,
the pooling gain exhibits monotonic improvement with both increasing
$K$ and $N$, which is consistent with Theorem \ref{thm:monotonicity}.
These findings not only validate our theoretical framework, but also
reveal that the pooling effect becomes increasingly pronounced with
growing openness.
\begin{figure}
\centering\includegraphics[width=0.35\paperwidth]{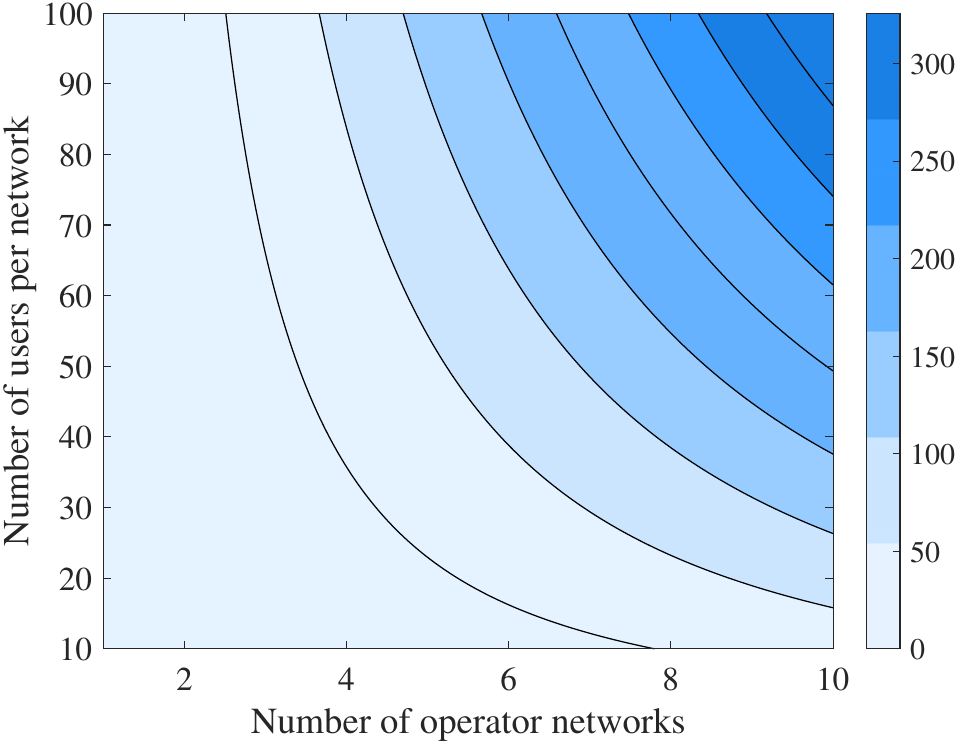}\caption{\foreignlanguage{american}{Pooling gain under different network settings.\label{fig:Blockchain-pooling-effect}}}
\end{figure}

Furthermore, Fig. \ref{fig:Blockchain-pooling-effect} visualizes
the pooling gain\textquoteright s dependence on the number of users
per network and network numbers through a contour plot, where darker
shades indicate higher gains. This gradient-based representation confirms
the monotonic growth property derived in Section VI. The pooling gain
consistently intensifies as network expands. Notably, the contour
spacing becomes progressively uniform at larger network scales. The
initially nonlinear growth of pooling gain gradually converges toward
linearity as network expands, which can be analytically characterized
by \eqref{eq:21}. When $N$ and $K$ are small, the logarithmic term
dominates, creating nonlinear growth; as network scale increases,
the ratio $\frac{\omega\left(KN\right)}{\omega\left(N\right)}$ approaches
a constant via harmonic number approximation, causing $\psi\left(N,K\right)$
to exhibit quasi-linear scaling with respect to $KN$. These insights
show that trustworthy resource pooling in O-RAN achieved through BC-PFS
can yield predictable, near-linear performance gains as the network
scales.

\section{Conclusion\label{sec:Conclusion}}

This work has presented the BC-PFS, a promising solution to establishing
trustworthy cross-operator user scheduling in O-RAN. By integrating
blockchain with the PFS, we have developed a decentralized framework
based on smart contracts to execute the PFS on-chain, enhancing trust
and traceability for O-RAN resource management. Meanwhile, we have
established both accurate and simplified closed-form expressions for
average throughput, offering key insights into system performance.
Our quantitative analysis shows that the pooling gain increases monotonically
with both the number of networks and users. Simulations have validated
the theoretical analysis and confirmed the superiority of the BC-PFS
over non-cooperative approaches for O-RAN. 

\bibliographystyle{IEEEtran}
\bibliography{IEEEabrv,BC-PFS-Part1_arxiv}

\begin{thebibliography}{10}
\providecommand{\url}[1]{#1}
\csname url@samestyle\endcsname
\providecommand{\newblock}{\relax}
\providecommand{\bibinfo}[2]{#2}
\providecommand{\BIBentrySTDinterwordspacing}{\spaceskip=0pt\relax}
\providecommand{\BIBentryALTinterwordstretchfactor}{4}
\providecommand{\BIBentryALTinterwordspacing}{\spaceskip=\fontdimen2\font plus
\BIBentryALTinterwordstretchfactor\fontdimen3\font minus
  \fontdimen4\font\relax}
\providecommand{\BIBforeignlanguage}[2]{{%
\expandafter\ifx\csname l@#1\endcsname\relax
\typeout{** WARNING: IEEEtran.bst: No hyphenation pattern has been}%
\typeout{** loaded for the language `#1'. Using the pattern for}%
\typeout{** the default language instead.}%
\else
\language=\csname l@#1\endcsname
\fi
#2}}
\providecommand{\BIBdecl}{\relax}
\BIBdecl

\bibitem{Fu2024}
Y.~Fu, X.~Wang, and F.~Fang, ``Multi-objective multi-dimensional resource
  allocation for categorized {QoS} provisioning in beyond {5G} and {6G} radio
  access networks,'' \emph{IEEE Trans. Commun.}, vol.~72, no.~3, pp.
  1790--1803, Mar. 2024.

\bibitem{whitePaper2018}
\BIBentryALTinterwordspacing
{O-RAN Alliance}, ``{O-RAN}: Towards an open and smart {RAN},'' White Paper,
  Oct. 2018. [Online]. Available:
  \url{https://mediastorage.o-ran.org/white-papers/O-RAN.White-Paper-2018-10.pdf}
\BIBentrySTDinterwordspacing

\bibitem{Polese2023}
M.~Polese, L.~Bonati, S.~D'Oro, S.~Basagni, and T.~Melodia, ``Understanding
  {O-RAN}: Architecture, interfaces, algorithms, security, and research
  challenges,'' \emph{IEEE Commun. Surv. Tutorials}, vol.~25, no.~2, pp.
  1376--1411, Jun. 2023.

\bibitem{Oransharing2023}
\BIBentryALTinterwordspacing
{O-RAN Alliance}, ``Spectrum sharing based on shared {O-RUs},'' {O-RAN} Next
  Generation Research Group Research Report, Oct. 2023. [Online]. Available:
  \url{https://mediastorage.o-ran.org/ngrg-rr/nGRG-RR-2023-05-Spectrum_Sharing_with_Shared_O-RU-v1_0.pdf}
\BIBentrySTDinterwordspacing

\bibitem{Damnjanovic2024}
A.~Damnjanovic, D.~Knisley, A.~Saurabh, R.~Prakash, X.~Zhang, and S.~Chen,
  ``Spectrum sharing with {O-RAN} architecture,'' in \emph{Proc. IEEE Int.
  Symp. Dyn. Spectr. Access Netw. (DySPAN)}, Washington, DC, USA, May 2024, pp.
  108--113.

\bibitem{Javed2025}
F.~Javed, J.~Mangues-Bafalluy, E.~Zeydan, and L.~Blanco, ``Trustworthy
  reputation for federated learning in {O-RAN} using blockchain and smart
  contracts,'' \emph{IEEE Open J. Commun. Soc.}, vol.~6, pp. 1343--1362, Feb.
  2025.

\bibitem{Giupponi2022}
L.~Giupponi and F.~Wilhelmi, ``Blockchain-enabled network sharing for {O-RAN}
  in {5G} and beyond,'' \emph{IEEE Netw.}, vol.~36, no.~4, pp. 218--225, Aug.
  2022.

\bibitem{Astely2009}
D.~Astely, E.~Dahlman, A.~Furusk\"{a}r, Y.~Jading, M.~Lindstr\"{o}m, and
  S.~Parkvall, ``{LTE}: the evolution of mobile broadband,'' \emph{IEEE Commun.
  Mag.}, vol.~47, no.~4, pp. 44--51, Apr. 2009.

\bibitem{Haque2023}
M.~E. Haque, F.~Tariq, M.~R.~A. Khandaker, K.-K. Wong, and Y.~Zhang, ``A survey
  of scheduling in {5G} {URLLC} and outlook for emerging {6G} systems,''
  \emph{IEEE Access}, vol.~11, pp. 34\,372--34\,396, Apr. 2023.

\bibitem{Zhou2011}
H.~Zhou, P.~Fan, and J.~Li, ``Global proportional fair scheduling for networks
  with multiple base stations,'' \emph{IEEE Trans. Veh. Technol.}, vol.~60,
  no.~4, pp. 1867--1879, Feb. 2011.

\bibitem{Gu2016}
J.~Gu, S.~J. Bae, S.~F. Hasan, and M.~Y. Chung, ``Heuristic algorithm for
  proportional fair scheduling in {D2D}-cellular systems,'' \emph{IEEE Trans.
  Wireless Commun.}, vol.~15, no.~1, pp. 769--780, Jan. 2016.

\bibitem{Li2018}
X.~Li, R.~Shankaran, M.~A. Orgun, G.~Fang, and Y.~Xu, ``Resource allocation for
  underlay {D2D} communication with proportional fairness,'' \emph{IEEE Trans.
  Veh. Technol.}, vol.~67, no.~7, pp. 6244--6258, Jul. 2018.

\bibitem{Zhang2022a}
M.~Zhang, Y.~Guo, L.~Sala\"{u}n, C.~W. Sung, and C.~S. Chen, ``Proportional
  fair scheduling for downlink mmwave multi-user {MISO-NOMA} systems,''
  \emph{IEEE Trans. Veh. Technol.}, vol.~71, no.~6, pp. 6308--6321, Jun. 2022.

\bibitem{Faisal2022}
T.~Faisal, M.~Dohler, S.~Mangiante, and D.~R. Lopez, ``{BEAT}:
  Blockchain-enabled accountable and transparent network sharing in {6G},''
  \emph{IEEE Commun. Mag.}, vol.~60, no.~4, pp. 52--56, 2022.

\bibitem{Wu2026}
M.~Wu, X.~Ling, J.~Wang, Y.~Le, K.~Huang, B.~Cao, Y.~Huang, D.~Niyato, Z.~Ding,
  and X.~You, ``Blockchain-driven resource management in wireless
  communications and networks: Models, approaches, and applications,''
  \emph{IEEE Commun. Surv. Tutorials}, vol.~28, pp. 2306--2344, Aug. 2026.

\bibitem{Xu2024}
H.~Xu, Z.~Zhou, L.~Zhang, Y.~Sun, and C.-L. I, ``{BE-RAN}: Blockchain-enabled
  {Open RAN} for {6G} with {DID} and privacy-preserving communication,'' in
  \emph{Proc. IEEE Global Commun. Conf. Workshops (GLOBECOM Workshops)}, Cape
  Town, SA, Dec. 2024, pp. 1015--1021.

\bibitem{Ling2019}
X.~Ling, J.~Wang, T.~Bouchoucha, B.~C. Levy, and Z.~Ding, ``Blockchain radio
  access network ({B-RAN}): Towards decentralized secure radio access
  paradigm,'' \emph{IEEE Access}, vol.~7, pp. 9714--9723, Jan. 2019.

\bibitem{Ling2025}
X.~Ling, Y.~Le, J.~Wang, Y.~Huang, and X.~You, ``Trust and trustworthiness in
  information and communications technologies,'' \emph{IEEE Wireless Commun.},
  vol.~32, no.~2, pp. 84--92, Apr. 2025.

\bibitem{Chen2025}
B.~Chen, X.~Ling, W.~Cao, J.~Wang, and Z.~Ding, ``Analysis of channel
  uncertainty in trusted wireless services via repeated interactions,''
  \emph{IEEE J. Sel. Areas Commun.}, vol.~43, no.~6, pp. 2248--2265, Jun. 2025.

\bibitem{Wang2022d}
Z.~Wang, B.~Cao, C.~Liu, C.~Xu, and L.~Zhang, ``Blockchain-based fog radio
  access networks: Architecture, key technologies, and challenges,''
  \emph{Digital Commun. Networks}, vol.~8, no.~5, pp. 720--726, Oct. 2022.

\bibitem{Ling2020a}
X.~{Ling}, J.~{Wang}, Y.~{Le}, Z.~{Ding}, and X.~{Gao}, ``Blockchain radio
  access network beyond 5{G},'' \emph{IEEE Wireless Commun.}, vol.~27, no.~6,
  pp. 160--168, Dec. 2020.

\bibitem{Cao2023}
W.~Cao, X.~Ling, J.~Wang, Z.~Ding, and X.~Gao, ``A framework for
  {QoS}-guaranteed fast access services in blockchain radio access network,''
  \emph{IEEE Trans. Wireless Commun.}, vol.~23, no.~4, pp. 2711 -- 2725, Apr.
  2024.

\bibitem{Ling2025a}
X.~Ling, Y.~Le, S.~Chen, J.~Wang, and X.~Zhou, ``Blockchain-enabled
  decentralized services and networks: Assessing roles and impacts,''
  \emph{IEEE J. Sel. Areas Commun.}, vol.~43, no.~6, pp. 2141--2154, Jun. 2025.

\bibitem{Femenias2024}
G.~Femenias, M.~Francisca~Hinarejos, F.~Riera-Palou, J.-L. Ferrer-Gomila, and
  A.~Jaume-Barcel\'{o}, ``Dynamic spectrum sharing in a blockchain enabled
  network with multiple cell-free massive {MIMO} virtual operators,''
  \emph{IEEE Access}, vol.~12, pp. 70\,615--70\,633, May 2024.

\bibitem{Al2024}
A.~Al-Khatib, H.~Hadi, H.~Timinger, and K.~Moessner, ``Blockchain-empowered
  resource trading for optimizing bandwidth reservation in vehicular
  networks,'' \emph{IEEE Access}, vol.~12, pp. 90\,084--90\,098, Jun. 2024.

\bibitem{Qian2025}
L.~Qian, C.~Liu, and J.~Zhao, ``User connection and resource allocation
  optimization in blockchain empowered metaverse over {6G} wireless
  communications,'' \emph{IEEE Trans. Wireless Commun.}, vol.~24, no.~1, pp.
  19--34, 2025.

\bibitem{Zhang2020d}
H.~Zhang, S.~Leng, and H.~Chai, ``A blockchain enhanced dynamic spectrum
  sharing model based on proof-of-strategy,'' in \emph{Proc. IEEE Int. Conf.
  Commun. (ICC)}, Dublin, IE, Jul. 2020, pp. 1--6.

\bibitem{Yu2026}
Z.~Yu, Z.~Chang, T.~Mikkonen, V.~Frascolla, and S.~Mumtaz, ``Blockchain
  cooperative spectrum management for {Wi-Fi} and {LTE}-unlicensed coexistence
  networks,'' \emph{IEEE Trans. Commun.}, vol.~74, pp. 6409--6425, Mar. 2026.

\bibitem{Zhang2022}
H.~Zhang, S.~Leng, Y.~Wei, and J.~He, ``A blockchain enhanced coexistence of
  heterogeneous networks on unlicensed spectrum,'' \emph{IEEE Trans. Veh.
  Technol.}, vol.~71, no.~7, pp. 7613--7624, Apr. 2022.

\bibitem{Zhu2022}
R.~Zhu, H.~Liu, L.~Liu, X.~Liu, W.~Hu, and B.~Yuan, ``A blockchain-based
  two-stage secure spectrum intelligent sensing and sharing auction
  mechanism,'' \emph{IEEE Trans. Ind. Inf.}, vol.~18, no.~4, pp. 2773--2783,
  Aug. 2022.

\bibitem{Ayepah2023}
D.~Ayepah-Mensah, G.~Sun, G.~O. Boateng, S.~Anokye, and G.~Liu,
  ``Blockchain-enabled federated learning-based resource allocation and trading
  for network slicing in {5G},'' \emph{IEEE/ACM Trans. Networking}, vol.~32,
  no.~1, pp. 654--669, Feb. 2024.

\bibitem{Rappaport2001}
T.~Rappaport, \emph{Wireless Communications: Principles and Practice},
  2nd~ed.\hskip 1em plus 0.5em minus 0.4em\relax USA: Prentice Hall PTR, Dec.
  2001.

\bibitem{Kushner2004}
H.~Kushner and P.~Whiting, ``Convergence of proportional-fair sharing
  algorithms under general conditions,'' \emph{IEEE Trans. Wireless Commun.},
  vol.~3, no.~4, pp. 1250--1259, Jul. 2004.

\bibitem{Liu2011}
E.~Liu, Q.~Zhang, and K.~K. Leung, ``Asymptotic analysis of proportionally fair
  scheduling in {Rayleigh} fading,'' \emph{IEEE Trans. Wireless Commun.},
  vol.~10, no.~6, pp. 1764--1775, Jun. 2011.

\bibitem{Liew2008}
S.~C. Liew and Y.~J. Zhang, ``Proportional fairness in multi-channel multi-rate
  wireless networks-part {I}: The case of deterministic channels with
  application to {AP} association problem in large-scale {WLAN},'' \emph{IEEE
  Trans. Wireless Commun.}, vol.~7, no.~9, pp. 3446--3456, Sep. 2008.

\bibitem{Ma2019}
H.~Ma, J.~Cheng, and X.~Wang, ``Proportional fair secrecy beamforming for
  {MISO} heterogeneous cellular networks with wireless information and power
  transfer,'' \emph{IEEE Trans. Commun.}, vol.~67, no.~8, pp. 5659--5673, Aug.
  2019.

\bibitem{kelly1997}
F.~Kelly, ``Charging and rate control for elastic traffic,'' \emph{Eur. Trans.
  Telecommun.}, vol.~8, no.~1, pp. 33--37, Jan. 1997.

\bibitem{Luo2024}
H.~Luo, X.~Yang, H.~Yu, G.~Sun, B.~Lei, and M.~Guizani, ``Performance analysis
  and comparison of nonideal wireless {PBFT} and {RAFT} consensus networks in
  {6G} communications,'' \emph{IEEE Internet Things J.}, vol.~11, no.~6, pp.
  9752--9765, Mar. 2024.

\bibitem{kaspa2025}
\BIBentryALTinterwordspacing
Kaspa. [Online]. Available: \url{https://kaspa.org/}
\BIBentrySTDinterwordspacing

\bibitem{Quorum2018}
\BIBentryALTinterwordspacing
A.~Baliga, I.~Subhod, P.~Kamat, and S.~Chatterjee, ``Performance evaluation of
  the quorum blockchain platform,'' \emph{arXiv:1809.03421}, Oct. 2018.
  [Online]. Available: \url{https://arxiv.org/abs/1809.03421}
\BIBentrySTDinterwordspacing

\bibitem{megaeth2025}
\BIBentryALTinterwordspacing
MegaEth. [Online]. Available: \url{https://www.megaeth.com/}
\BIBentrySTDinterwordspacing

\bibitem{Jung2011}
H.~Jung, T.~â. Kwon, K.~Cho, and Y.~Choi, ``{REACT}: Rate adaptation using
  coherence time in 802.11 {WLANs},'' \emph{Comput. Commun.}, vol.~34, no.~11,
  pp. 1316--1327, Jul. 2011.

\end{thebibliography}

\end{document}